\DeclareMathOperator*{\argmax}{argmax}
\newtheorem{proposition}{Proposition}
\newtheorem{remark}{Remark} 
\title{CRC-Assisted Channel Codes for Integrated Passive Sensing and Communications}
\author{Chenghong Bian, Kaitao Meng, Huihui Wu, Yumeng Zhang, Deniz G{\"u}nd{\"u}z
\thanks{C. Bian and D. G{\"u}nd{\"u}z are with the Department of Electrical and Electronic Engineering, Imperial College London (E-mails: \{c.bian22, d.gunduz\}@imperial.ac.uk). K. Meng is with the Department of Electronic and Electrical Engineering, University College London, London, UK (email: kaitao.meng@ucl.ac.uk). H. Wu is with Biren Tech Research, Shanghai, China (email: huihui.wu@ieee.org). Y. Zhang is with the Department of Electrical and Computer Engineering, Hong Kong University of Science and Technology, Hong Kong (E-mail: eeyzhang@ust.hk).
}
}
\begin{document}

\maketitle
\begin{abstract}
We propose a novel coded integrated passive sensing and communication (CIPSAC) system with orthogonal frequency division multiplexing (OFDM), where a multi-antenna base station (BS) passively senses the parameters of the targets and decodes the information bit sequences transmitted by a user.  
The transmitted signal is comprised of pilot and data OFDM symbols where the data symbols adopt cyclic redundancy check (CRC)-assisted channel codes to facilitate both the decoding and sensing procedures. In the proposed scheme, CRC not only enhances the reliability of communication but also provides guidance to the parameter sensing procedure at the BS.
In particular, a novel iterative parameter sensing and channel decoding (IPSCD) algorithm is proposed, where the correctly decoded codewords that pass CRC are utilized for sensing to improve the parameter estimation accuracy, and in return, more accurate parameter estimates lead to a larger number of correctly decoded data symbols. Conventional sensing algorithms rely only on the received pilot signals, while we utilize both the data and pilot signals for sensing. We provide a detailed analysis of the optimal strategy, in which the wrongly decoded data packets are replaced by zero codewords. 
To further improve the performance, we introduce learning-based near-orthogonal superposition (NOS) codes, which exhibit superior error correction capability especially in the short block length regime. NOS codes are trained using a weighted loss function, where a hyper parameter is introduced to balance the sensing and the communication losses.
Simulation results show the effectiveness of the proposed CIPSAC system and the IPSCD algorithm, where both the sensing and decoding performances are significantly improved with a few iterations. We also carry out extensive ablation studies for a comprehensive understanding of the proposed scheme.
\end{abstract}

\begin{IEEEkeywords} 
Integrated sensing and communication (ISAC), CRC-assisted channel codes, Deep learning, Iterative decoding, Sensing and communication trade-offs. \end{IEEEkeywords}

\section{Introduction}
\label{sec:intro}

Integrated sensing and communications (ISAC) has been gaining significant research interest \cite{isac0, isac1, isac2}. {ISAC  can enhance spectrum and energy efficiency, reduce hardware costs, and address the bandwidth congestion problem in upcoming 6G  networks \cite{network_level, isac_6g}.} ISAC systems can be catagorized as active or passive sensing. In ISAC with active sensing, the transmitted information carrying signal (typically from the BS) serves also for sensing by utilizing its echo to estimate parameters of interest \cite{sense_irs, isac_ris, doa_isac, duplex_isac, liu2021dual}. In passive sensing, on the other hand, the BS uses the received signals from other users to perform joint sensing and decoding without transmitting signals of its own \cite{two_stage_ipsac, isac_learn}.

\begin{figure}
\centering
\includegraphics[width=\columnwidth]{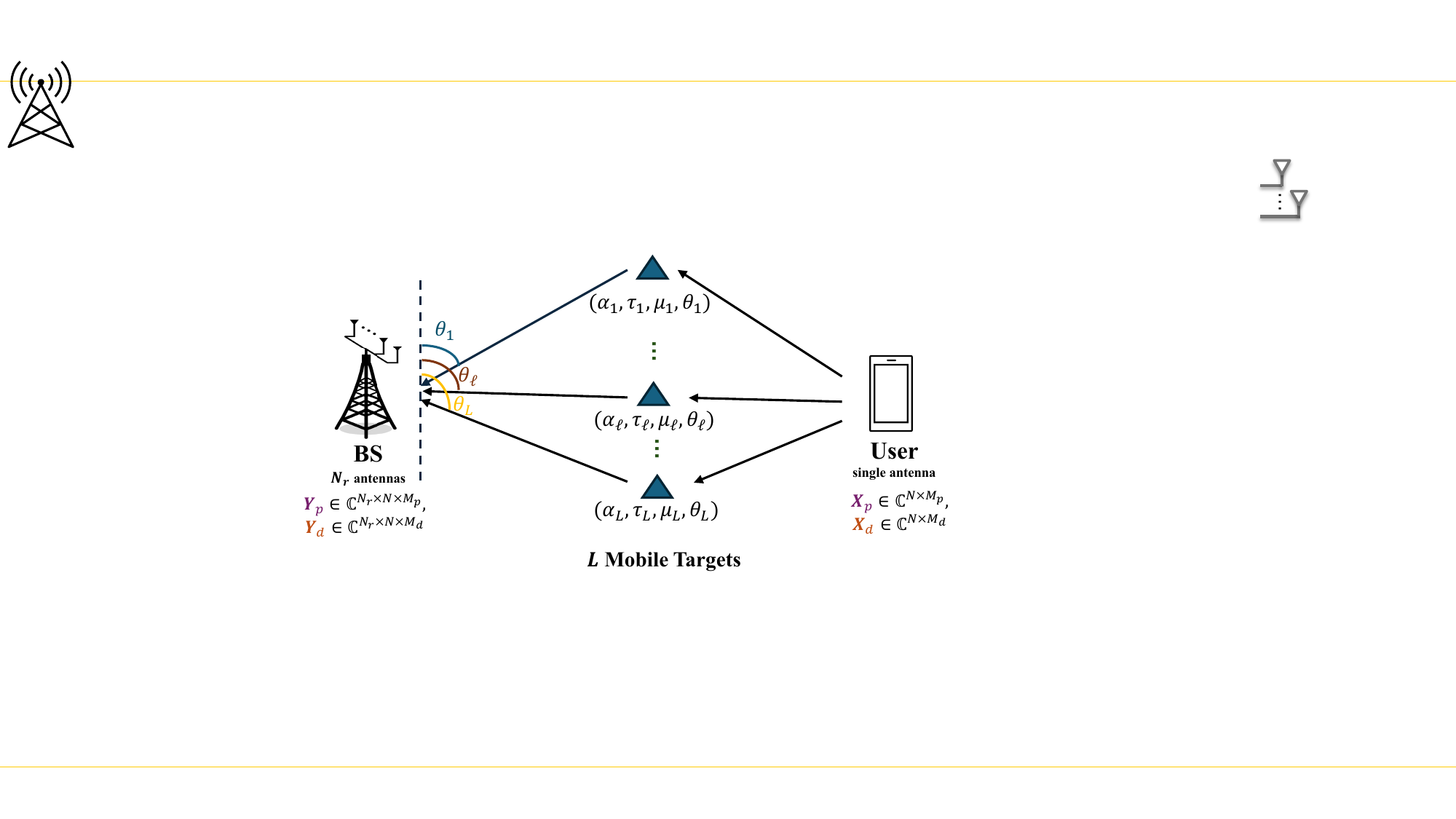}
\caption{In the considered CIPSAC system, the user transmits both pilot and coded data packets to the BS using OFDM. We assume $L$ mobile targets/scatterers, where the $\ell$-th one, $\ell \in [1, L]$, is characterized by its radar cross section (RCS), $\alpha_\ell$, delay, $\tau_\ell$, Doppler, $\mu_\ell$, and AoA, $\theta_\ell$, parameters.}
\label{fig:fig_system}
\end{figure}

{Existing works focus mainly on the theoretical bounds for the mean square error (MSE) of the estimated parameters and the achievable communication rate \cite{isac0, isac1}. However, these communication rates cannot be realized without well-designed channel codes \cite{isac_code, isac_svc, lisac, learn_isac}.} When operating under the finite block length regime, the packet error rate (PER) of the uncoded ISAC system remains at a high level  even in high signal-to-noise ratio (SNR) regimes \cite{crc_isac}. Thus, it is essential to show that the promised gains of ISAC systems holds also with coded signals.  The authors in \cite{isac_code} reveal that satisfactory (active) sensing performance can be achieved if low density parity check (LDPC)-coded signals are utilized for sensing. In \cite{isac_svc}, the authors consider sparse vector codes (SVCs) can be a candidate for active sensing, and show that they outperform convolutional codes in terms of both error correction and sensing performance. In \cite{lisac},  coded waveform design is studied for an active ISAC system with OFDM, where the encoder and decoder are parameterized by recurrent neural networks (RNNs) and are optimized in an end-to-end fashion. All the aforementioned works focus on active sensing, and to the best of our knowledge, no prior work considers coded passive sensing within the ISAC framework. We emphasize that it is not possible to naively apply the code design in \cite{isac_svc, lisac} for active sensing to the considered passive sensing system. Our goal in this paper is to employ coding in ISAC with passive sensing framework. The most relevant works in the current study are \cite{two_stage_ipsac} and \cite{isac_learn}, both of which consider the transmission of uncoded quadrature amplitude modulation (QAM) symbols. {{In particular, in \cite{two_stage_ipsac}, the authors adopt a transformer-based neural network to first detect the QAM symbols and the detected symbols are used for parameter sensing. However, the QAM symbols in \cite{two_stage_ipsac} are uncoded and a considerable amount of symbol errors occur even under a relatively high SNR. These symbol errors will degrade the sensing performance, which further hinders the subsequent communication performance.}} Although the authors of \cite{isac_learn} propose an advanced learning-based iterative data detection and sensing algorithm, the performance is limited due to the lack of error correction.

To the best of our knowledge, we are the first to consider the channel coded integrated passive sensing and communication (CIPSAC) system with OFDM which is a general solution to different setups. In our implementation, illustrated in Fig. \ref{fig:fig_system}, user transmits both pilot and channel coded data packets to a base station over a time-varying multi-path fading channel. The goal of BS is both to decode data packets, and to estimate certain channel parameters, which in turn provide information about targets in the environment. 
The proposed IPSCD algorithm improves the parameter estimation performance despite insufficient number of pilot symbols by treating the correctly decoded data packets as extra pilots.
One challenge that stems from this approach is the potential destructive effect of wrongly decoded packets. This is mitigated by employing CRC in the channel code. In particular, we employ the CRC-assisted sparse regression codes \cite{SPARC2014, HDM2018, nos} with a $K$-best decoding algorithm \cite{mimo_nos} and Polar codes with successive list (SCL) decoding \cite{polar_code, scl_polar} due to their superior performances for short and long data packet transmissions, respectively.

The main contributions of this paper are summarized as follows:

\begin{itemize}
    \item We present a novel IPSCD algorithm for CIPSAC with OFDM, where a user transmits both pilot and data symbols to the BS over a channel induced by multiple targets in the environment. The proposed algorithm is iterative, and the sensing performance can be improved at each iteration by treating the correctly decoded data packets in the previous iteration as pilots, and in return, the more accurate parameter estimates will allow decoding new data packets that could not decoded before. Significant improvements in both the sensing and communication performances  are shown to be achieved within a few iterations.

    \item We provide a theoretical analysis of the parameter sensing procedure, which serves as the core component of the IPSCD algorithm. 
    Compared to the conventional sensing algorithm that relies only on pilot symbols, the core innovation of the proposed sensing algorithm is to replace the wrongly decoded codewords (indicated by CRC flags) using zero codewords for the subsequent processing. We further proved that the proposed solution with zero codewords is optimal for the overall system performance. 

    \item To further improve the performance, we introduce learning-based NOS codes \cite{nos} with a $K$-best decoding algorithm. The flexibility of the CIPSAC system is enhanced by training different NOS codes using loss functions parameterized  by different hyper parameters to achieve different sensing and communication trade-offs.

    \item Rigorous numerical experiments are carried out under different setups and different channel codes. Experimental results demonstrate the effectiveness of the proposed CIPSAC framework over the standard ISAC system. Ablation studies are also carried out  to provide a comprehensive understanding of the proposed CIPSAC system.
    
\end{itemize}

{\it Notations:} Throughout the paper, normal-face letters (e.g., $x$) represent scalars, while uppercase letters (e.g., $X$) represent random variables. Matrices and vectors are denoted by bold {upper} and {lower} case letters (e.g., $\bm{X}$ and $\bm{x}$), respectively.   
$\Re(x)$ ($\Im(x)$) denotes the real (imaginary) part of {a complex variable} $x$.
Transpose and Hermitian operators are denoted by $(\cdot)^\top$, $(\cdot)^\dagger$, respectively. $\text{vec}(\cdot)$ represents the vectorization operation and $\text{diag}(\cdot)$ outputs the diagonal elements of an matrix. {Finally, $\|\bm{S}\|_F$  denotes the Frobneous norm of the matrix $\bm{S}$.}

\section{System Model}\label{sec:System_Model}
In this section, we introduce the standard approach for target sensing and data transmission for the passive ISAC system. The evaluation metric of both the sensing and communication performances are also provided.

\subsection{Pilot Transmission}

We consider the passive ISAC scenario depicted in Fig. \ref{fig:fig_system}, where a user with a single antenna communicates with the BS equipped with $N_r$ antennas via an $L$-tap multi-path fading channel. The $L$ targets can either be static or mobile, and each of them is characterized by its radar cross section (RCS), $\alpha_\ell$, delay, $\tau_\ell$, Doppler, $\mu_\ell$ and angle of arrival (AoA), $\theta_\ell$.

The user adopts OFDM transmission with $N$  OFDM subcarriers and $N_G$ guard subcarriers. We denote the signal bandwidth by $B$, with a delay resolution of $\Delta \tau = \frac{1}{B}$, and the time duration of an OFDM symbol is expressed as $T_{\text{OFDM}} = \frac{N+N_G}{B}$. Note that we assume that the maximal delay of the multi-path channel is shorter than the cyclic prefix (CP) duration.

OFDM signals from the user consist of the pilot and data signals, as detailed below \cite{isac_waveform_design}.
We denote the pilot signal by $\bm{X}_p \in \mathbb{C}^{N\times M_p}$, which consists of $M_p$ consecutive  OFDM symbols. We assume a uniform linear array (ULA) is adopted at the receiver with half-wavelength spacing, i.e., $d = \frac{\lambda}{2}$, where $\lambda$ denotes the wavelength of the transmitted signal.
The Doppler resolution achieved by the pilot signal, $\bm{X}_p$, can be determined as $\Delta f_D =  \frac{1}{M_p T_{\text{OFDM}}}$.  As a result, $N_G$ different delay parameters, $\tilde{\tau}_\ell$, and $M_p$ different Doppler parameters, $\tilde{\mu}_\ell$, can be estimated via $\bm{X}_p$ where we assume $M_p$ is even:
\begin{align}
    \tilde{\tau}_\ell = n_\ell \Delta \tau, \quad &n_\ell \in \{0, \ldots, N_G - 1\}, \notag \\
    \tilde{\mu}_\ell = m_\ell \Delta f_D, \quad &m_\ell \in \{-M_p/2, \ldots, -M_p/2 - 1\}.
    \label{equ:est_dd}
\end{align}
The true delay and Doppler parameters, denoted as $\tau_\ell$ and $\mu_\ell$, lie on the delay-Doppler grid with units $(\Delta \tau', \Delta f_D')$, respectively. Without loss of generality, we assume that the unit of the true delay parameter is identical to $\Delta \tau$ yet the unit of the true Doppler parameter may deviate from $\Delta f_D$:
$$\Big\{\Delta \tau', \Delta f_D'\Big\} = \Big\{\Delta \tau, \frac{1}{M T_{\text{OFDM}}}\Big\},$$
where $M$ stands for the minimum number of OFDM symbols to achieve the required Doppler resolution, $\Delta f_D'$.\footnote{The assumption is well suited for the proposed CIPSAC system in Section \ref{sec:CIPSAC}, in which more correctly decoded OFDM data symbols serve as pilot signals leading to a higher Doppler resolution, $\Delta f_{D}$.} Given the resolution and the maximum detectable delay and Doppler ranges of the pilot OFDM signal, $\bm{X}_p$, indicated by \eqref{equ:est_dd}, the support sets for the true delay and Doppler parameters can be expressed as:\footnote{If the real $\tau, \mu$ parameters exceed the maximum values in \eqref{equ:real_dd}, then, we need to adjust the bandwidth, $B$ and the duration of the OFDM symbol, $T_{\text{OFDM}}$ to enable satisfactory sensing performance.}
\begin{align}
    \tau_\ell = n_\ell \Delta \tau, \quad &n_\ell \in \{0, \ldots, N_G - 1\}, \notag \\
    \mu_\ell = m_\ell \Delta f_D', \quad &m_\ell \in \{-M/2, \ldots, M/2-1\}.
    \label{equ:real_dd}
\end{align}

To achieve accurate delay and Doppler estimation, it is required that the number of OFDM symbols, $M_p$, should be larger than $M$. 
Notice that the number of pilot symbols, $M_p$, is adjustable, and we set $M_p = r M$ with $r\in \mathbb{Z}^+$ to achieve accurate Doppler estimation. It will be shown in Section \ref{sec:CIPSAC} that the condition $M_p = r M$ can be relaxed thanks to additional correctly decoded data OFDM symbols.

The received pilot signal at the receiver in the frequency domain, denoted by $\bm{Y}_p \in \mathbb{C}^{N_r\times N \times M_p}$, can be expressed as:
\begin{align}
    {Y}_p[r,n,m] = & \underbrace{\sum_{\ell=1}^L  \alpha_\ell {X}_p[n,m] \underbrace{e^{-j 2\pi \frac{n_\ell n}{N}}}_{\text{delay}} \underbrace{e^{j 2\pi \frac{m_\ell m}{M}}}_{\text{Doppler}} \underbrace{e^{j \pi r\cos(\theta_\ell)}}_{\text{AoA}}}_{{S}_p[r, n, m]} \notag \\ 
    &+ {W}_p[r,n,m],
    \label{equ:simo_channel}
\end{align}
where $r \in [0, N_r-1], n \in [0, N-1], m \in [0, M_p-1]$, $n_\ell, m_\ell$ are the delay and Doppler parameters introduced in \eqref{equ:real_dd}, $\bm{S}_p \in \mathbb{C}^{N_r \times N \times M_p}$ is the  noiseless received signal and $W_p[r, n, m] \sim \mathcal{CN}(0, \sigma_p^2)$ denotes the independent and identically distributed (i.i.d.) additive white Gaussian noise (AWGN) term. 
The SNR of the pilot signal is defined as
\begin{equation}
    \mathrm{SNR}_p \triangleq \frac{\mathbb{E}(\|\bm{S}_p\|_F^2)}{\mathbb{E}(\|\bm{W}_p\|_F^2)} = 1/\sigma_p^2.
    \label{equ:def_snr}
\end{equation}
Note that the second equation holds as we assume $\mathbb{E}(\|\bm{X}_{p,m}\|_2^2) = N$  and $\alpha_\ell \sim \mathcal{CN}(0, \frac{1}{L}), \ell \in [1, L]$, where $\bm{X}_{p, m}$ is defined as $\bm{X}_{p, m} \triangleq \bm{X}_{p}[:, m]$.

The parameters, $\{\alpha_\ell, \tau_\ell, \mu_\ell, \theta_\ell\}, \ell \in [1, L]$ of the targets can be estimated from the received pilot, $\bm{Y}_p$, via the sensing algorithm presented in Section \ref{sec:CIPSAC}. Let the resultant estimates be denoted by $\{\hat{\alpha}_\ell, \hat{\tau}_\ell, \hat{\mu}_\ell, \hat{\theta}_\ell\}, \ell \in [1, L]$. We also note that, given fixed $\Delta \tau'$ and $\Delta f'_D$, $\hat{\tau}_\ell, \hat{\mu}_\ell$ can be determined by $\hat{n}_\ell, \hat{m}_\ell$, which will be used interchangeably in the following discussions.

\subsection{Data Transmission}\label{sec:data_trans}
First, we present the transmission and decoding of the data packets.
We consider $M_d$ OFDM symbols, each containing information bit sequence, $\bm{b}_b \in \{0, 1\}^{N_b}, b \in [0, M_d-1]$, which is coded and modulated to generate $\bm{x}_{b} \in \mathbb{C}^N$ with a power constraint, $\|\bm{x}_{b}\|^2 \le N$. The modulated codeword is transmitted over the channel\footnote{{We assume the parameters for each of the $L$ targets keep constant for the $M_p$ OFDM symbols for the pilot signal and the subsequent $M_d$ OFDM symbols for the data signal.}} with the same parameters, $\{\alpha_\ell, \tau_\ell, \mu_\ell, \theta_\ell\}, \ell \in [1, L]$ to generate the received signal, $\bm{Y}_d \in \mathbb{C}^{N_r\times N \times M_d}$:
\begin{align}
    {Y}_d[r,n,m] = & \sum_{\ell=1}^L  \alpha_\ell {X}_d[n,m] e^{-j 2\pi \frac{n_\ell n}{N}} e^{j 2\pi \frac{m_\ell m}{M}} e^{j \pi r\cos(\theta_\ell)} \notag \\ 
    &+ {W}_d[r,n,m],
    \label{equ:real_yd}
\end{align}
where $\bm{X}_d$ is obtained by concatenating all the OFDM data symbols, $\bm{X}_d \triangleq [\bm{x}_1, \ldots, \bm{x}_{M_d}]$, while $W_d[r, n, m] \sim \mathcal{CN}(0, \sigma_d^2)$ denotes the i.i.d. AWGN for data transmission and the SNR of the data signal follows the same definition of the pilot signal in \eqref{equ:def_snr}.
After obtaining the estimates of the sensing parameters, $\{\hat{\alpha}_\ell, \hat{\tau}_\ell, \hat{\mu}_\ell, \hat{\theta}_\ell\}$, via the pilot signal, we generate the channel estimate $\hat{\bm{H}}_d \in \mathbb{C}^{N_r \times  N \times M_d}$ as follows:
\begin{align}
    \hat{{H}}_d[r,n,m] & =  \sum_{\ell=1}^L  \hat{\alpha}_\ell  e^{-j 2\pi \frac{\hat{n}_\ell n}{N}}e^{j 2\pi \frac{\hat{m}_\ell m}{M}} e^{j \pi r\cos(\hat{\theta}_\ell)},
    \label{equ:def_csi}
\end{align}
where $\quad r \in [0, N_r-1], n \in [0, N-1], m \in [0, M_d-1]$.
We then decode each one of the $M_d$ information bit sequences $\bm{\hat{b}}_b$, by feeding ${\bm{Y}}_{d, b} \in \mathbb{C}^{N_r \times N}$ and $\hat{\bm{H}}_{d, b} \in \mathbb{C}^{N_r \times N}$ defined as ${\bm{Y}}_{d, b}, \hat{\bm{H}}_{d, b} \triangleq {\bm{Y}}_{d}[:, :, b], \hat{\bm{H}}_{d}[:, :, b]$ to the decoding algorithm detailed in Section \ref{sec:codes_cisac}. 

\subsection{Performance Metrics} \label{sec:sense_para}
After applying the parameter sensing and channel decoding algorithms detailed in Section \ref{sec:CIPSAC} and \ref{sec:codes_cisac}, respectively, to the received pilot and data signals, we obtain the estimated sensing parameters, $\{\hat{\alpha}_\ell, \hat{\tau}_\ell, \hat{\mu}_\ell, \hat{\theta}_\ell\}$, and the decoded bit sequences, $\hat{\bm{b}}_b, b \in [0, M_d-1]$.
For the communication performance, we adopt the  packet error rate (PER) defined as:
\begin{equation}
\text{PER} = \mathbb{E} \left( \frac{1}{M_d} \sum_{b=0}^{M_d-1}  \bm{1}(\bm{{b}}_b \neq \bm{\hat{b}}_b) \right),
\label{equ:per_define}
\end{equation}
where $\bm{1}(\cdot)$ denotes the indicator function, which is 1 if the two bit sequences are identical, and 0 otherwise.

Since the range, velocity and direction of a target can be uniquely determined by the delay, Doppler and the AoA parameters, we use the mean square error (MSE) of these parameters to evaluate the overall sensing performance. Since they are defined similarly, we take the MSE of the delay as an example:
\begin{align}
    \text{MSE}_{\tau} &= \mathbb{E} \left( \sum_{\ell = 1}^L \|n_\ell - \hat{n}_\ell\|^2/N_G^2 \right),
    \label{equ:individual_mse}
\end{align}
where $\tau_\ell = n_\ell \Delta \tau$.
Note that all the three parameters are normalized within the range of [0, 1) to calculate their MSE values. 
We also define the overall MSE as the summation of these three  MSE values:
\begin{equation}
    \text{MSE} \triangleq \text{MSE}_{\tau} + \text{MSE}_{\mu} + \text{MSE}_{\theta},
    \label{equ:overall_mse}
\end{equation}
where $\text{MSE}_{\mu}$ and $\text{MSE}_{\theta}$ denote the MSE of the Doppler and AoA parameters, respectively.
It is worth noting that we can directly sum up the MSE values of different parameters in \eqref{equ:overall_mse} thanks to the normalization in \eqref{equ:individual_mse}.

\begin{figure*}[t]
\centering
\includegraphics[width=\linewidth]{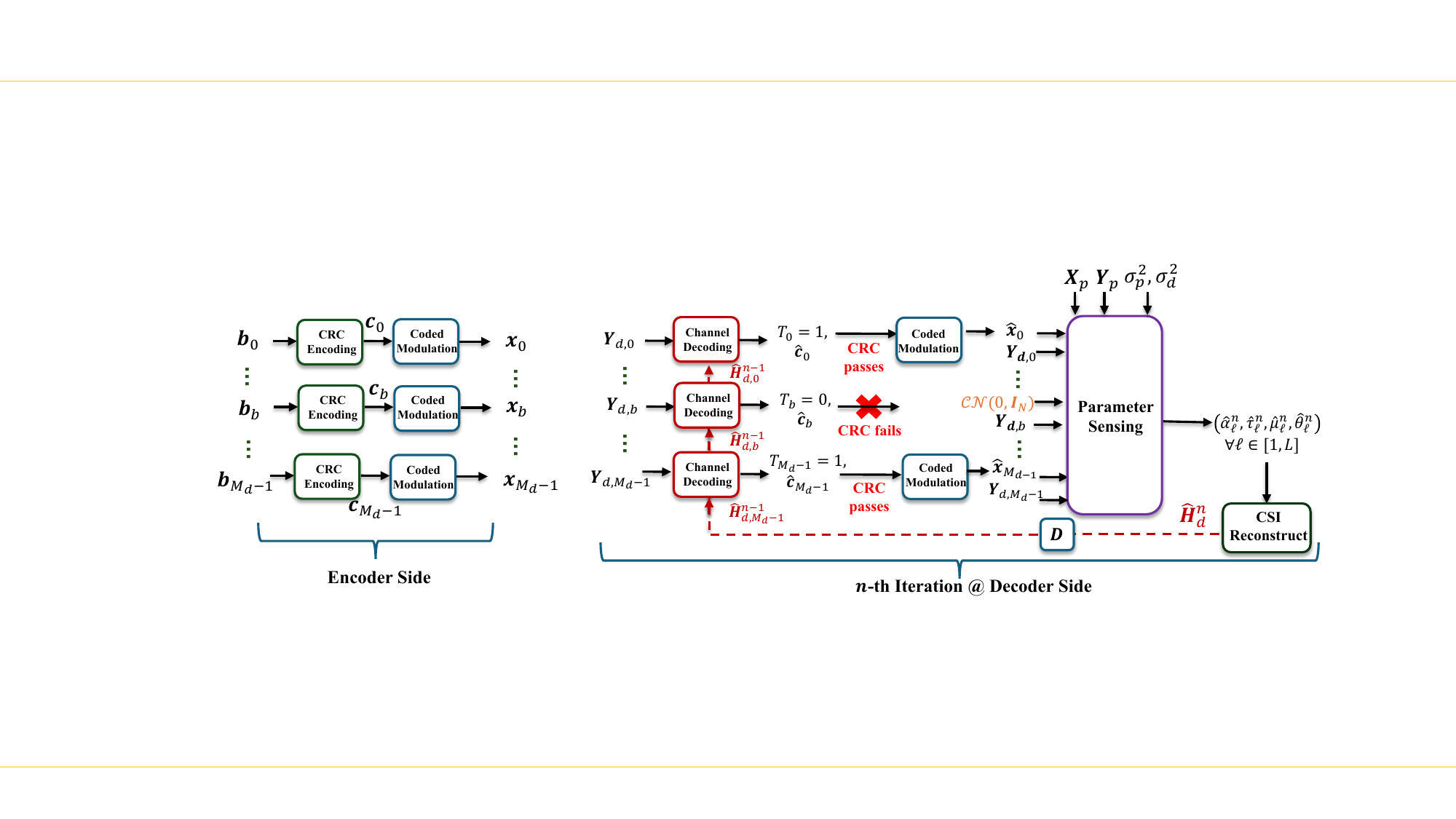}\\
\caption{Illustration of the proposed CIPSAC system. At the encoder, $M_d$ bit sequences are first CRC encoded followed by channel encoding and modulation to generate the data symbols. At the decoder, we run the proposed IPSCD algorithm where for the $n$-th iteration, the channel decoder produces both the decoded bit sequence, $\hat{\bm{c}}_b$ and the CRC flag, $T_b$.  The decoded bit sequence $\bm{\hat{c}}_b$ with $T_b = 1$ are re-encoded yet the failed ones with $T_b = 0$ are replaced by randomly generated signals for the next round of parameter sensing process.}
\label{fig:ICED}
\end{figure*}

\section{The Proposed CIPSAC Framework}\label{sec:CIPSAC}
In this section, we illustrate the proposed CIPSAC framework where we implement the channel coding and modulation at the transmitter and iterative parameter sensing and channel decoding (IPSCD) algorithm at the receiver. We consider a scenario where a single antenna user communicates with a multi-antenna BS over the corresponding time-varying single input multiple output (SIMO-OFDM) channel characterized by mobile targets/scatterers.

\subsection{CRC-assisted Channel Encoding and Modulation}
The overall encoding process is comprised of two parts, namely, the CRC encoding and the subsequent channel coded modulation. To start with, the input bit sequence, $\bm{b} \in \{0, 1\}^{N_b}$, is fed to the CRC encoding block, where $N_{crc}$ extra CRC bits are appended to $\bm{b}$ to produce a CRC-encoded bit sequence, $\bm{c} \in \{0, 1\}^{N_b+N_{crc}}$, for the subsequent channel encoding and modulation process, denoted by $f(\cdot)$. The transmitted codeword, $\bm{x} \in \mathbb{C}^N$, can be expressed as:
\begin{equation}
\bm{x} = f(\bm{c}),
\label{equ:channel_enc}
\end{equation}
where $\bm{x}$ satisfies the power constraint in Section \ref{sec:data_trans}. We note that different $f(\cdot)$ realizations correspond to different coded modulation schemes. In this paper, we will consider Polar codes with QPSK modulation, sparse regression  code (SPARC) \cite{SPARC2014} and learning-based NOS codes \cite{nos} as detailed in Section \ref{sec:codes_cisac}.

\subsection{Iterative Parameter Sensing and Channel Decoding (IPSCD) Algorithm}
Here, we present the proposed IPSCD algorithm adopted at the receiver, where both the communication and sensing performances can be significantly improved with a few iterations. In the standard ISAC algorithm presented in Section \ref{sec:System_Model}, the BS estimates the parameters, $\{\hat{\alpha}_\ell, \hat{\tau}_\ell, \hat{\mu}_\ell, \hat{\theta}_\ell\}$, and reconstructs the estimated channel matrix, $\hat{\bm{H}}_d$, for the OFDM data symbols using  the $M_p$   OFDM pilot symbols, and the subsequent $M_d$  data symbols are decoded using $\hat{\bm{H}}_d$. However, this will result in a poor performance especially when $M_p$ is less than the minimum number of OFDM symbols to achieve the actual Doppler resolution, $\Delta f'_D$, i.e., $M_p < M$. In this case, each of the Doppler estimates, $\hat{\mu}_\ell$, deviates significantly from the ground truth, $\mu_\ell$, leading to less satisfactory sensing and decoding performances. To mitigate this, in the IPSCD algorithm, we treat the successfully decoded data packets as pilot signals, and use them to generate more accurate parameter estimates, which will help improve the decoding performance. The overall flowchart of the proposed scheme is shown in Fig. \ref{fig:ICED} and is detailed as follows.

As shown in the left hand side of Fig. \ref{fig:ICED}, $M_d$  transmitted symbols, $\bm{x}_b, b\in [0, M_d-1]$, are generated according to \eqref{equ:channel_enc}.  After passing through the SIMO-OFDM channel, the receiver has access to $\bm{Y}_{d, b} \in \mathbb{C}^{N_r\times N}, b \in [0, M_d-1]$, as well as the received pilot signal, $\bm{Y}_p \in \mathbb{C}^{N_r\times N \times M_p}$. 
The IPSCD algorithm starts by initializing the CRC flag and the decoded bit sequence for each of the $M_d$ data packets to $\{T_b = 0, \hat{\bm{c}}_b = \bm{0}\}_{b \in [0, M_d-1]}$, respectively, where $T_b \in \{0, 1\}$ is the CRC flag. 
The BS  estimates the sensing parameters, $\{\hat{\alpha}^0_\ell, \hat{\tau}^0_\ell, \hat{\mu}^0_\ell, \hat{\theta}^0_\ell\}, \ell \in [1, L]$ as follows:
\begin{equation}
    \{\hat{\alpha}^0_\ell, \hat{\tau}^0_\ell, \hat{\mu}^0_\ell, \hat{\theta}^0_\ell\}_{\ell \in [1, L]} = h(\bm{Y}_p, \bm{Y}_d, \bm{X}_p, \hat{\bm{X}}_d^{0}, \bm{T}),
    \label{equ:para_sensing}
\end{equation}
where $h(\cdot)$ represents the sensing procedure detailed later in Section \ref{sec:sense_alg}, $\hat{\bm{X}}_d^{0}$ is the estimate of the transmitted data symbols at the receiver, which can be initialized using an arbitrary matrix since none of the $M_d$ data packets has been successfully decoded and the superscript of the parameters denotes the iteration index. To facilitate the following illustrations, we assume the entries of $\hat{\bm{X}}_d^{0}$ are i.i.d., each following a complex Gaussian distribution, $\mathcal{CN}(0, 1)$.
With the estimated sensing parameters in \eqref{equ:para_sensing}, we reconstruct the channel estimate for the data packets,  denoted by $\bm{\hat{H}}_d^{0}$, according to \eqref{equ:def_csi}:
\begin{align}
    \hat{{H}}^0_d[r,n,m] & =  \sum_{\ell=1}^L  \hat{\alpha}^0_\ell  e^{-j 2\pi \frac{\hat{n}^0_\ell n}{N}}e^{j 2\pi \frac{\hat{m}^0_\ell m}{M}} e^{j \pi r\cos(\hat{\theta}^0_\ell)},
    \label{equ:csi_rec}
\end{align}
where $\hat{n}^0_\ell, \hat{m}^0_\ell = \hat{\tau}^0_\ell/\Delta \tau, \hat{\mu}^0_\ell/\Delta f_D'$ as in \eqref{equ:real_dd}.

In the first iteration, the data symbols are decoded using $\bm{\hat{H}}^{0}_d$ in \eqref{equ:csi_rec}. The decoding process for the $b$-th data packet can be expressed as:
\begin{equation}
\{T_b, \hat{\bm{c}}_b\} = g(\bm{Y}_{d, b},  {\bm{\hat{H}}^{0}_{d, b}}),
\label{equ:channel_dec}
\end{equation}
where $g(\cdot)$ denotes the  CRC-assisted channel decoding algorithm corresponding to the encoding function, $f(\cdot)$ in \eqref{equ:channel_enc}, $T_b$ is the CRC flag and $\bm{Y}_{d, b},  {\bm{\hat{H}}^{0}_{d, b}} \triangleq \bm{Y}_{d}[:,:,b],  {\bm{\hat{H}}^{0}_{d}[:,:,b]}$. It is worth mentioning that having $T_b = 1$ does not guarantee successful decoding of the packet, i.e., there exists a scenario where $T_b = 1$ yet $\hat{\bm{c}}_b \neq \bm{c}_b$. This is because some wrongly decoded data packets may also pass the CRC. We term this event as an `outage' and the outage probability is denoted by $P_o$. Since $P_o$ is typically small, it is plausible to assume that the packet with $T_b = 1$ is correctly decoded and the output bit sequence, $\hat{\bm{c}}_b$, will be used to estimate the parameters for the next iteration.\footnote{It is shown in the experiment that the wrongly decoded packets that pass the CRC have limited effect on the final performance.}

{After decoding, the decoded bit sequence, $\hat{\bm{c}}_b$, with $T_b = 1$ is re-encoded into the transmitted signal, defined as $\hat{\bm{x}}_{b}$. Then, the estimate of the transmitted data packets for the first iteration, $\hat{\bm{X}}_d^{1} \in \mathbb{C}^{N\times M_d}$, can be represented as follows:}
\begin{equation}
    \hat{\bm{X}}_{d, b}^{1} = \left\{
    \begin{aligned}
    \hat{\bm{x}}_{b} & , & T_b = 1, \\
    \mathcal{CN}(\bm{0}, \bm{I}_N) & , & T_b = 0.
    \end{aligned}
    \right.
    \label{equ:def_hat_X}
\end{equation}

After obtaining $\hat{\bm{X}}_d^{1}$, we update the sensing parameters for the first iteration as:
\begin{equation}
    \{\hat{\alpha}^1_\ell, \hat{\tau}^1_\ell, \hat{\mu}^1_\ell, \hat{\theta}^1_\ell\}_{\ell \in [1, L]} = h(\bm{Y}_p, \bm{Y}_d, \bm{X}_p, \hat{\bm{X}}_d^{1}, \bm{T}),
    \label{equ:para_sensing_1}
\end{equation}
which are then adopted to generate the channel estimate $\bm{\hat{H}}_d^{1}$ according to \eqref{equ:csi_rec} for channel decoding of the second iteration. We further note that, for the $i$-th iteration with $i>1$,  we only need to decode the data packets with $T_b = 0$ for reduced decoding complexity. The IPSCD algorithm terminates if the number of iterations reaches $N_{i}$, or all the $M_d$ data packets pass CRC. We summarized the proposed IPSCD algorithm in Algorithm \ref{algorithm:idce}.

\begin{algorithm}
	\caption{The proposed IPSCD algorithm.}
    \label{algorithm:idce}

	\SetKwInOut{Input}{Input}\SetKwInOut{Output}{Output}
	\SetKwFunction{UpdateCodebook}{UpdateCodebook}
	\SetKwFunction{SPARCEncode}{SPARCEncode}
	\SetKwFunction{Construct}{Construct}
	
	\Input{$\bm{X}_p, \bm{Y}_p, \bm{Y}_d, K, {N_{i}}, M_d$}
	\Output{$\{\hat{\bm{c}}_b\}_{b \in [0, M_d-1]}, \{\hat{\alpha}^{N_{i}}_\ell, \hat{\tau}^{N_{i}}_\ell, \hat{\mu}^{N_{i}}_\ell, \hat{\theta}^{N_{i}}_\ell\}_{\ell \in [1, L]}, $} 
	
	\BlankLine
        Initialize the CRC flags and the estimate of the transmitted signals, $\bm{T} \leftarrow  \bm{0}_{M_d}, \hat{{X}}_d^{0}[n, b] \sim \mathcal{CN}(0, 1)$,\\
        Estimate $\{\hat{\alpha}^0_\ell, \hat{\tau}^0_\ell, \hat{\mu}^0_\ell, \hat{\theta}^0_\ell\}_{\ell \in [1, L]} = h(\bm{Y}_p, \bm{Y}_d, \bm{X}_p, \hat{\bm{X}}_d^{0}, \bm{T})$,\\
        Reconstruct the channel estimate, $\hat{{H}}^0_d[r,n,m]  =  \sum_{\ell=1}^L  \hat{\alpha}^0_\ell  e^{-j 2\pi \frac{\hat{n}^0_\ell n}{N}}e^{j 2\pi \frac{\hat{m}^0_\ell m}{M}} e^{j \pi r\cos(\hat{\theta}^0_\ell)}$,\\
	
        \For{$i=1$ \KwTo {$N_i$}}{
            Initialize $\hat{\bm{X}}_d^{i} \leftarrow \bm{0}_{N\times M_d}$, \\
		\For{$b=0$ \KwTo $M_d-1$}{
             \eIf{$T_b = 0$}{
                $\{T_b, \hat{\bm{c}}_b\} = g(\bm{Y}_{d, b},  {\bm{\hat{H}}^{i-1}_{d, b}})$, \\
                \eIf{$T_b=0$}{
                $\hat{\bm{X}}_{d, b}^{i} \leftarrow \mathcal{CN}(\bm{0}, \bm{I}_N)$,\\
                }
                {$\hat{\bm{X}}_{d, b}^{i} \leftarrow f( \hat{\bm{c}}_b)$,\\}}
            {
            $\hat{\bm{X}}_{d, b}^{i} \leftarrow \hat{\bm{X}}_{d, b}^{i-1}$,\\}
		}

        Estimate $\{\hat{\alpha}^i_\ell, \hat{\tau}^i_\ell, \hat{\mu}^i_\ell, \hat{\theta}^i_\ell\}_{\ell \in [1, L]} = h(\bm{Y}_p, \bm{Y}_d, \bm{X}_p, \hat{\bm{X}}_d^{i}, \bm{T})$,\\
        Reconstruct $\hat{{H}}^i_d[r,n,m]  =  \sum_{\ell=1}^L  \hat{\alpha}^i_\ell  e^{-j 2\pi \frac{\hat{n}^i_\ell n}{N}}e^{j 2\pi \frac{\hat{m}^i_\ell m}{M}} e^{j \pi r\cos(\hat{\theta}^i_\ell)}$.

	}

	\BlankLine
\end{algorithm}

\subsection{Sensing Algorithm}\label{sec:sense_alg}
Next, we present the sensing function, $h(\cdot)$, which outputs sensing parameters using the original pilot signals as well as the correctly decoded data packets. The sensing algorithm for the static scenario is presented in \cite{crc_isac} and can be understood as a special case of the considered scenario with $L$ mobile targets and the receiver is equipped with $N_r>1$ antennas. As shown in Fig. \ref{fig:sensing_flowchart}, the sensing function, $h(\cdot)$, takes $\bm{Y}_p \in \mathbb{C}^{N_r\times N\times M_p}, \bm{Y}_d \in \mathbb{C}^{N_r\times N\times M_d}, {\bm{X}}_p \in \mathbb{C}^{N\times M_p}, \hat{\bm{X}}_d \in \mathbb{C}^{N\times M_d}$ and $\bm{T} \in \{0, 1\}^{M_d}$ as input, and outputs the estimates of the sensing parameters, $\{\hat{\alpha}_\ell, \hat{\tau}_\ell, \hat{\mu}_\ell, \hat{\theta}_\ell\}_{\ell \in [1, L]}$. We illustrate the estimation process for each sensing parameter as follows.  

\begin{figure*}[t]
\centering
\includegraphics[width=0.9\linewidth]{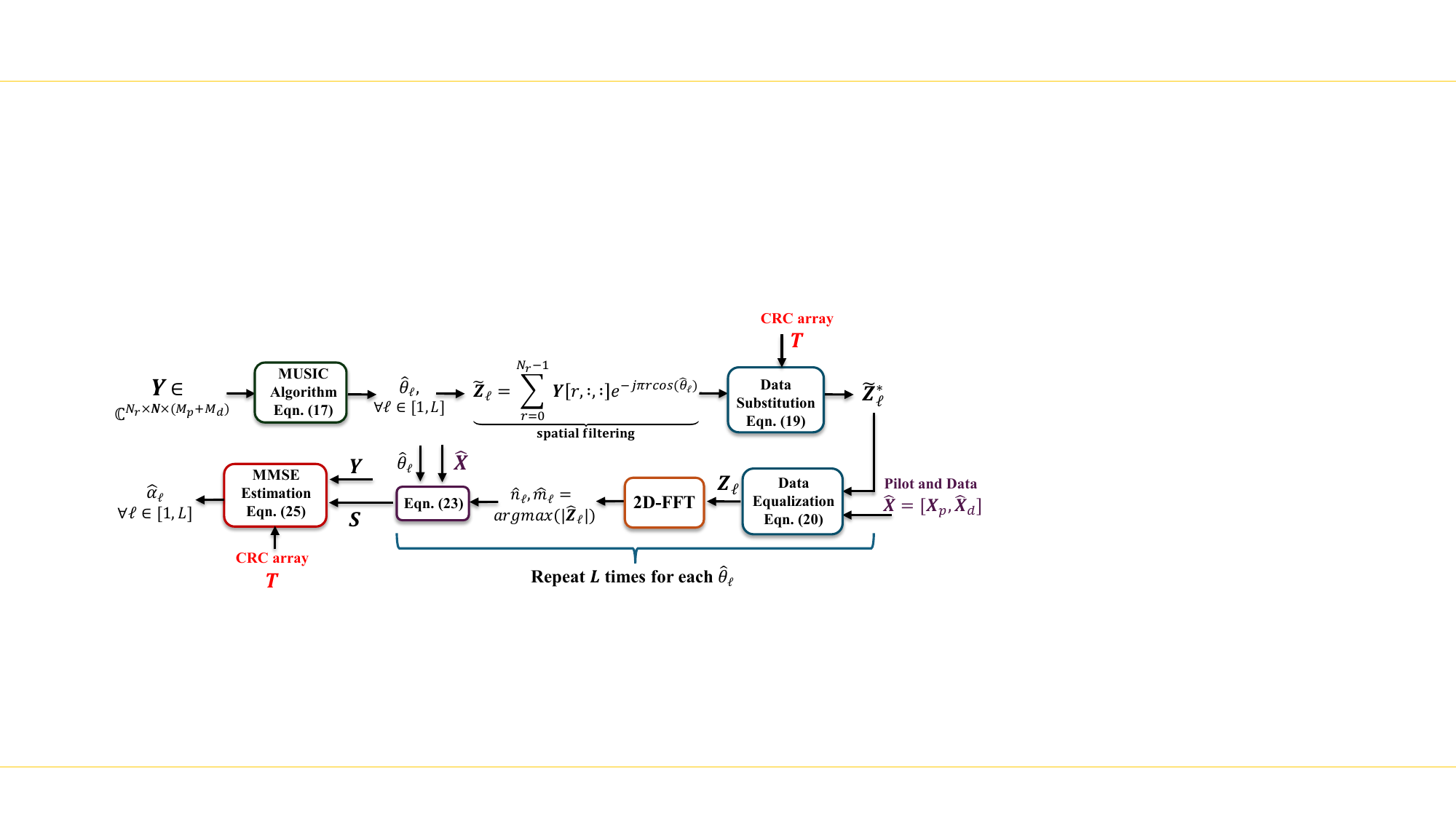}\\
\caption{Illustration of the sensing procedure of the proposed IPSCD algorithm. In particular, we first apply MUSIC algorithm to the input signal ${\bm{Y}}$ to obtain the AoA estimates, $\hat{\theta}_\ell, \ell \in [1, L]$. Then, for the $\ell$-th AoA estimate, $\hat{\theta}_\ell$, we perform spatial filtering, data substitution, and data equalization detailed in Section \ref{sec:sense_alg} to obtain $\bm{Z}_\ell \in \mathbb{C}^{N \times (M_p + M_d)}$. After applying the 2D-FFT to $\bm{Z}_\ell$, the delay and Doppler parameters, $\hat{\tau}_\ell, \hat{\mu}_\ell$ (or equivalently, $\hat{n}_\ell, \hat{m}_\ell$), associated with the AoA estimate, $\hat{\theta}_\ell$ can be obtained. Finally, the RCS parameters are calculated according to Eqn. \eqref{equ:est_alpha}.}
\label{fig:sensing_flowchart}
\end{figure*}

\subsubsection{AoA estimate}
To start with, the standard multiple signal classification (MUSIC) algorithm is applied to the received signals, $\bm{Y} \in \mathbb{C}^{N_r  \times N \times (M_p+M_d)}$, which is obtained by concatenating the received pilot, $\bm{Y}_p$, and the data signals, $\bm{Y}_d$, along the last dimension regardless of whether the data packets within $\bm{Y}_d$ are successfully decoded or not. We first generate the covariance matrix:
\begin{equation}
    \bm{R} \triangleq \frac{1}{(M_p+M_d)N} \bm{{Y}}_r\bm{{Y}}_r^\dagger,
\end{equation}
where $\bm{{Y}}_r \in \mathbb{C}^{N_r\times (M_p+M_d)N}$ is obtained by reshaping $\bm{Y}$. Then, the eigenvalue decomposition is applied to $\bm{R}$, and the noise subspace, denoted by $\bm{E} \in \mathbb{C}^{(N_r-L)\times N_r}$, is obtained by taking the eigenvectors corresponding to the $N_r-L$ smallest eigenvalues. Denote the array response vector by $\bm{a}(\psi) = [1, e^{j\pi \cos(\psi)}, \ldots, e^{j(N_r-1)\pi \cos(\psi)}]^\top$, the spectrum of the MUSIC algorithm can be expressed as:
\begin{equation}
    \text{Spec}(\psi) = \frac{1}{\bm{a}^\dagger(\psi) \bm{E}^\dagger \bm{E} \bm{a}(\psi)},
    \label{equ:music_alg}
\end{equation}
and the angles corresponding to the largest $L$ peaks of $\text{Spec}(\psi)$ serve as the AoA estimates, $\hat{\theta}_\ell, \ell \in [1, L]$, of the target.

\subsubsection{Delay and Doppler estimates}
We then try to figure out the delay and Doppler parameters for each of the $\hat{\theta}_\ell$ values.
To achieve this, we first apply spatial filtering where the matched filtered output between ${\bm{Y}} \in \mathbb{C}^{N_r \times N \times (M_p + M_d)}$ and the steering vector, $\bm{a}(\hat{\theta}_\ell)$, is computed to obtain the two-dimensional (2D) matrix, $\widetilde{\bm{Z}}_\ell \in \mathbb{C}^{N\times (M_p+M_d)}$:
\begin{equation}
    \widetilde{\bm{Z}}_\ell = \sum_{r = 0}^{N_r-1} {\bm{Y}}[r, :, :]e^{-j\pi r \cos(\hat{\theta}_\ell)},
    \label{eqqu:def_tilde_Z}
\end{equation}
which will be used for subsequent processing.

\textbf{Data substitution}:
However, unlike \eqref{equ:music_alg} where we directly feed $\bm{Y}$ to the MUSIC algorithm for AoA estimation, extra processing for the matrix $\widetilde{\bm{Z}}_\ell$ is required due to the fact that the estimation accuracy of the delay and Doppler parameters strongly depend on the correct decoding of the received data symbols, $\bm{Y}_d$. When $T_b = 0$, i.e., a decoding error happens,   the $b$-th   OFDM data symbol is no longer capable for sensing. In the static scenario illustrated in \cite{crc_isac}, the packet with $T_b = 0$ is directly discarded and the cardinality {of the latent $\widetilde{\bm{Z}}_\ell$ is reduced}. In the mobile case, however, we estimate the delay and Doppler parameters by applying efficient 2D-FFT algorithm\footnote{We can also adopt the maximum likelihood (ML) algorithm, however, its complexity is prohibitive in practice.} to the latent, $\widetilde{\bm{Z}}_\ell$, thus, it is impossible to directly remove the data packets with $T_b = 0$ as the 2D-FFT algorithm fails if the dimensionality of the latent matrix reduces.  To this end, we introduce $\bm{u}_m \in \mathbb{C}^{N}$ to replace the wrongly decoded symbols in $\widetilde{\bm{Z}}_\ell$ as follows:
\begin{equation}
    \widetilde{\bm{Z}}^*_\ell = \left\{
    \begin{aligned}
    \widetilde{\bm{Z}}_{\ell, m}  & , & m \in [0, M_p-1] \; || \; T_{m-M_p} = 1, \\
    \bm{u}_m & , & m \ge M_p \; \& \; T_{m-M_p} = 0,
    \end{aligned}
    \right.
    \label{equ:def_tilde_Z}
\end{equation}
where $\widetilde{\bm{Z}}_{\ell, m} \triangleq \widetilde{\bm{Z}}_\ell[:, m], m\in [0, M_p+M_d-1]$.

\textbf{Data equalization}:
Before estimating the delay and Doppler parameters, we equalize $\widetilde{\bm{Z}}^*_\ell$ using the pilot signal, $\bm{X}_p$, and the estimated data symbols, $\hat{\bm{X}}_{d, b}$, to generate ${\bm{Z}}_\ell$:
\begin{equation}
    {\bm{Z}}_{\ell, m} = \left\{
    \begin{aligned}
    \widetilde{\bm{Z}}^*_{\ell, m}/\bm{X}_{p, m}  & , & m \in [0, M_p-1], \\
    \widetilde{\bm{Z}}^*_{\ell, m}/\hat{\bm{X}}_{d, m-M_p} & , & m \ge M_p.
    \end{aligned}
    \right.
    \label{equ:def_Z}
\end{equation}
It is worth mentioning that different $\bm{u}_m$ realizations make a difference and there exists an optimal solution which is summarized in the next proposition:

\begin{proposition}
Among all randomly chosen $\bm{u}_m$ realizations, $\bm{u}_m = \bm{0}_{N}$ is the optimal solution to achieve the best peak-to-side-lobe-ratio (PSR) for the delay and Doppler estimation.
\label{pro:pro1}
\end{proposition}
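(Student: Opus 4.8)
The plan is to express the 2D-FFT output of the equalized matrix $\bm{Z}_\ell$ in \eqref{equ:def_Z} as a superposition of a ``signal'' contribution from the pilot and correctly decoded columns and an ``interference'' contribution from the substituted columns, and then to show that any nonzero $\bm{u}_m$ can only degrade the resulting PSR. Let $\mathcal{G} = \{m : m \le M_p-1 \ \text{or}\ T_{m-M_p}=1\}$ index the good columns and $\mathcal{B} = \{m : m \ge M_p,\ T_{m-M_p}=0\}$ the substituted ones. After spatial filtering with $\bm{a}(\hat{\theta}_\ell)$ and equalization, each good column of $\bm{Z}_\ell$ is, up to noise, the clean tone $N_r \alpha_\ell e^{-j2\pi n_\ell n/N} e^{j 2\pi m_\ell m/M}$, whereas each substituted column equals $\bm{v}_m \triangleq \bm{u}_m / \hat{\bm{X}}_{d,m-M_p}$ (elementwise). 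Writing $F[\hat n,\hat m]$ for the 2D-FFT, I would split $F = F_{\mathcal G} + F_{\mathcal B}$, where $F_{\mathcal G}$ is independent of the choice of $\bm{u}_m$ and produces a coherent peak at $(n_\ell, m_\ell)$ of height proportional to $|\mathcal G|$ together with a fixed side-lobe pattern.

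The crux is to analyze the interference term $F_{\mathcal B}[\hat n,\hat m] = \sum_{m\in\mathcal B} V_m[\hat n]\, e^{-j 2\pi \hat m m / M}$, where $V_m[\hat n] = \sum_n v_m[n]\, e^{j 2\pi \hat n n/N}$ is the 1D-DFT of the $m$-th substituted column. The key structural observation is that $|V_m[\hat n]|$ is independent of $\hat m$: a single substituted column spreads its entire energy $\|\bm{v}_m\|^2$ uniformly along the whole Doppler axis rather than concentrating it at $m_\ell$, so the substituted columns cannot build a genuine peak and their energy necessarily lands in the side-lobe region. I would make this precise through Parseval's identity, by which the total FFT energy equals the fixed good-column energy plus the injected amount $\sum_{m\in\mathcal B}\|\bm{v}_m\|^2$; any nonzero $\bm{u}_m$ therefore strictly raises the side-lobe floor.

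To convert this into a statement about the PSR, I would invoke the ``randomly chosen'' hypothesis to model each $\bm{u}_m$ as statistically independent of the unknown target parameters $(n_\ell, m_\ell, \alpha_\ell)$. Then at the peak bin the cross term satisfies $\mathbb{E}\{F_{\mathcal G}^{*} F_{\mathcal B}\}[n_\ell,m_\ell]=0$, so a random $\bm{u}_m$ cannot reliably increase the numerator of the PSR, while at the side-lobe bins the added energy $\sum_{m\in\mathcal B}\|\bm{u}_m / \hat{\bm{X}}_{d,m-M_p}\|^2 > 0$ strictly increases the expected maximum side-lobe magnitude. Hence the expected PSR is non-increasing in the energies of the $\bm{u}_m$ and is uniquely maximized when every $\bm{u}_m=\bm{0}_N$, which annihilates $F_{\mathcal B}$ and leaves the denominator at its minimum.

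The main obstacle I anticipate is rigorously excluding the possibility that a cleverly crafted (non-random) $\bm{u}_m$ destructively cancels part of the fixed side-lobe pattern of $F_{\mathcal G}$ and thereby improves the PSR; this is precisely why the independence hypothesis is essential, since the side-lobe pattern of $F_{\mathcal G}$ is governed by the unknown $(n_\ell, m_\ell)$ that the receiver cannot exploit when forming $\bm{u}_m$. A secondary technical point is to control the signal--interference cross term at off-peak bins in expectation and to confirm that the residual noise and inter-target leakage left by the MUSIC and spatial-filtering stages do not interact with the choice of $\bm{u}_m$, so that the optimization over $\bm{u}_m$ decouples cleanly from those terms.
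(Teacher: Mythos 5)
Your proposal is correct in its essentials and rests on the same two pillars as the paper's own proof (Appendix~\ref{sec:APPA}): (i) because the substituted columns are statistically independent of the data, the channel gain, and the noise, every signal--substitution cross term vanishes in expectation, so a nonzero $\bm{u}_m$ cannot coherently add to the peak; and (ii) the energy a substituted column injects is spread uniformly over the bins, so it lands overwhelmingly in the side lobes and can only lower the PSR, which is uniquely maximized at $\bm{u}_m=\bm{0}_N$. The execution, however, differs. The paper reduces to a single-target, single-antenna setting with one erroneous column $m^*$ (multi-target being handled by invoking SIC), models the sensing step as a matched filter whose reference signal $\bm{S}$ equals $\bm{u}$ on that column, and expands $\mathbb{E}|r_{n_g,v}|^2$ term by term to reach a closed-form PSR of the shape $(\text{$\bm{u}$-independent terms}) - (N_GM-1)(\sigma^2+|a|^2)\sum_n\mathbb{E}|u_n|^2$, from which optimality of $\bm{u}=\bm{0}$ is immediate; you instead split the 2D-FFT into good/bad column contributions and invoke Parseval plus the observation that $|V_m[\hat n]|$ is constant in $\hat m$, which is a cleaner, more structural route to the same uniform-spreading fact and stays closer to the algorithm as actually implemented. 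Two caveats on your version: first, the paper's PSR (Eq.~\eqref{equ:PSR}) is the expected peak energy minus the \emph{sum} of all side-lobe energies, under which your Parseval argument is exact, whereas your occasional appeal to the ``expected maximum side-lobe magnitude'' is a different, harder metric and should be dropped; second, your equalized interference column $\bm{v}_m = \bm{u}_m/\hat{\bm{X}}_{d,m-M_p}$ divides elementwise by i.i.d.\ $\mathcal{CN}(0,1)$ entries, so $\mathbb{E}\|\bm{v}_m\|^2=\infty$ for any nonzero $\bm{u}_m$ --- the conclusion still favors $\bm{u}_m=\bm{0}_N$, but to keep all expectations finite you should either condition on $\hat{\bm{X}}_d$ or, as the paper does, analyze the correlation (matched-filter) form in which $\bm{u}$ enters multiplicatively rather than as a divisor.
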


\begin{proof}
We refer to Appendix \ref{sec:APPA} for the proof.
\end{proof}

Intuitively, when a decoding error happens, i.e., $T_{m-M_p} = 0$, the estimate of the transmitted signal, $\hat{\bm{X}}_{d, m-M_p}$, may deviate significantly from the actual one. Thus, during the \textbf{data equalization} procedure, $\forall \bm{u}_m \neq \bm{0}_N$ realization may produce outlier ${\bm{Z}}_{\ell, m}$ values due to the wrong $\hat{\bm{X}}_{d, m-M_p}$ estimate which would significantly hinder the subsequent 2D-FFT procedure. Having $\bm{u}_m = \bm{0}_N$, on the other hand, not only mitigates the performance degradation caused by the wrongly decoded signals but also provides an efficient implementation even without the need to generate random signals.

We then apply 2D-FFT to $\bm{Z}_\ell$, and the estimated delay and Doppler parameters, $\hat{n}_\ell, \hat{m}_\ell$ (which are equivalent to $\hat{\tau}_\ell, \hat{\mu}_\ell$ according to \eqref{equ:real_dd}), are determined by finding the indices of the largest absolute value of the 2D-FFT output, $\hat{\bm{Z}}_\ell$:
\begin{equation}
    \hat{n}_\ell, \hat{m}_\ell = \argmax_{n, m} |\hat{\bm{Z}}_\ell|.
    \label{equ:est_tau_mu}
\end{equation}
where $n \in [0, N_G], m \in [-M/2, M/2-1]$.

\begin{remark}
    We note that the estimation of $\hat{m}_\ell$ in \eqref{equ:est_tau_mu} corresponds to the special case where $M = M_p + M_d$. For a more general setup where $M_p + M_d = rM, r\in \mathbb{Z}^+$, $\hat{m}_\ell$ should be scaled by $r$ for its actual range.
\end{remark}

\begin{remark}
    It is worth mentioning that the sensing algorithm mentioned above to obtain the estimates of the AoA, delay and Doppler parameters can be generalized to the case where only $\bm{Y}_p$ is available to the receiver. In this case, even when $M_p < M$,\footnote{Here, if we only use $M_p$ pilot OFDM symbols to perform sensing, the resolution of the estimated Doppler parameter is below the actual one leading to less satisfactory estimation and decoding performance.} it is indicated in \eqref{equ:def_tilde_Z} that by applying zero-padding, we can also obtain $\bm{Z}_\ell$ with dimension $N \times M$ which achieves the required Doppler resolution, $\Delta f_D'$.
\end{remark}

\subsubsection{RCS estimate}
Combining \eqref{equ:music_alg} and \eqref{equ:est_tau_mu}, we obtain the AoA, delay, and Doppler estimates, $\{\hat{\theta}_\ell, \hat{\tau}_\ell, \hat{\mu}_\ell\}, \ell \in [1, L]$ for all the $L$ targets. To generate the channel estimate, $\hat{\bm{H}}_d$ in Algorithm \ref{algorithm:idce}, it is essential to estimate the RCS, $\alpha_\ell$, for each of the $\ell$-th target. 

To start with, we first notice that we can re-organize the channel input and output relationship described in \eqref{equ:simo_channel}, and express the received signal $\bm{Y}$ as follows:
\begin{align}
    \bm{Y} = [\bm{S}_1, \ldots, \bm{S}_L] \bm{\alpha}+ \bm{W},
    \label{equ:real_y}
\end{align}
where $\bm{\alpha} \triangleq [\alpha_1, \ldots, \alpha_L]^\top$ and $\bm{S}_\ell \in \mathbb{C}^{N_r \times N \times (M_p + M_d)}$ with
\begin{align}
    {S}_\ell[r, n, m] = {X}[n, m]e^{-j 2\pi \frac{n_\ell n}{N}} e^{j 2\pi \frac{m_\ell m}{M}} e^{j \pi r\cos(\theta_\ell)}.
    \label{equ:gen_s}
\end{align}
Note that $\bm{X} \triangleq [\bm{X}_p, \bm{X}_d]$ is obtained by concatenating the pilot and data OFDM symbols.
Similar to \cite{crc_isac}, we generate the MMSE estimate of $\bm{\alpha}$ based on Eqn. \eqref{equ:real_y}. However, since we have no access to the transmitted codeword, $\bm{X}_{d, b}$, with $T_b = 0$, we remove the corresponding entries from each of $\bm{S}_\ell, \ell \in [1, L]$, the received signal, $\bm{Y}$, and the noise tensor, $\bm{W}$, to obtain $\widetilde{\bm{S}}_\ell$, $\widetilde{\bm{Y}}$ and $\widetilde{\bm{W}}$, respectively. All these tensors are of the same dimension, i.e., $N_r \times N \times (M_p+n)$ where $n = \sum_{b=0}^{M_d-1} T_b$. After vectorizing $\widetilde{\bm{S}}_\ell$, $\widetilde{\bm{Y}}$ and $\widetilde{\bm{W}}$, \eqref{equ:real_y} transforms into:
\begin{equation}
    \tilde{\bm{y}} = \underbrace{[\tilde{\bm{s}}_1, \ldots, \tilde{\bm{s}}_L]}_{\widetilde{\bm{S}}} \bm{\alpha} + \tilde{\bm{w}},
\end{equation}
where $\tilde{\bm{y}}, \tilde{\bm{s}}_\ell, \tilde{\bm{w}} = \text{vec}(\widetilde{\bm{Y}}), \text{vec}(\widetilde{\bm{S}}_\ell), \text{vec}(\widetilde{\bm{W}})$ are the vectorized version of the tensors and $\widetilde{\bm{S}} \in \mathbb{C}^{N_rN(M_p+n)\times L}$. Then, $\bm{\alpha}$ can be estimated via standard MMSE algorithm:
\begin{equation}
    \hat{\bm{\alpha}} = \widetilde{\bm{S}}^\dagger (\widetilde{\bm{S}} \widetilde{\bm{S}}^\dagger + \bm{R}_{\widetilde{\bm{W}}})^{-1} \tilde{\bm{y}},
    \label{equ:est_alpha}
\end{equation}
where $\bm{R}_{\widetilde{\bm{W}}}$ denotes the covariance matrix of $\widetilde{\bm{W}}$ satisfying $$\text{diag}(\bm{R}_{\widetilde{\bm{W}}}) = [\underbrace{\sigma_p^2, \ldots, \sigma_p^2}_{\times N_rNM_p}, \quad \underbrace{\sigma_d^2, \ldots, \sigma_d^2}_{\times N_rN n}].$$
After obtaining \eqref{equ:est_alpha}, all the parameters, $\{\hat{\alpha}_\ell, \hat{\tau}_\ell, \hat{\mu}_\ell, \hat{\theta}_\ell\}, \ell \in [1, L]$ are obtained and the corresponding reconstructed channel estimate is utilized for the next round of channel decoding. The overall flowchart of the sensing algorithm is shown in Fig. \ref{fig:sensing_flowchart}.


\newcommand{\y}{\bm{y}}
\newcommand{\A}{\bm{A}}
\newcommand{\x}{\bm{x}}
\newcommand{\Sig}{{S}}
\newcommand{\HM}{\bm{H}}
\newcommand{\tabincell}[2]{\begin{tabular}{@{}#1@{}}#2\end{tabular}} 

\begin{algorithm}
	\caption{CRC-assisted $K$-best decoding algorithm for SIMO-OFDM channel.}
    \label{algorithm:decode}

	\SetKwInOut{Input}{Input}\SetKwInOut{Output}{Output}
	\SetKwFunction{KBest}{KBest}
	\SetKwFunction{CRCcorrection}{CRCcorrection}
	\SetKwFunction{IdxToBits}{IdxToBits}
	\SetKwFunction{CRCDecode}{CRCDecode}
	\SetKwFunction{Reorder}{Reorder}
	\SetKwFunction{ChooseLayer}{ChooseLayer}
	\SetKwFunction{SelectNodes}{SelectNodes}
	\SetKwFunction{SelectDistinctNodes}{SelectDistinctNodes}
 \SetKwFunction{LoopedKBest}{LoopedKBest}
	\SetKwData{list}{outputList}
	\SetKwData{idx}{idx}
	\SetKwData{errFlag}{errFlag}
	\SetKwData{decodedBits}{decodedBits}
	\SetKwData{anc}{anc}
	
	\Input{$K, {N_{i}^d}, \bm{y},  {{\{\bm{\mathcal{C}}_{\bm{H}}\}}}$}
	\Output{$T, \bm{\hat{c}}$} 
	
	\BlankLine
	\For{$k=1$ \KwTo $K$}{
		$\bm{u}(k), s(k), \idx(k) \leftarrow 0, 0, [\;]$ \\ 
		
	}
    $\mathcal{O} \leftarrow [\;]$ \hfill (empty decoded layer index)\\
    $T \leftarrow 0$ \hfill (CRC flag)\\
        \%\% \textbf{{Original $K$-best decoding:}}\\
	\For{${j}=1$ \KwTo {$V$}}{
	    ${l_j} \leftarrow$ \ChooseLayer{$\overline{\mathcal{O}}$}\\
	    $\mathcal{O} \leftarrow$ $[\mathcal{O}, {l_j}]$\\
		\For{$k=1$ \KwTo $K$}{
			
			 {{$\textbf{s}^{tmp}(k) \leftarrow s(k) - 2\Re(\y^\dagger\bm{{\cal C}}_{l_j,\bm{H}}-\bm{u}^\dagger(k)\bm{{\cal C}}_{l_j,\bm{H}}) + diag(\bm{{\cal C}}_{l_j,\bm{H}}^\dagger\bm{{\cal C}}_{l_j,\bm{H}})$}}\\
		}
		$[\bm{s}, \idx_{new}, \anc] \leftarrow$ \SelectNodes{$\bm{s}^{tmp}, K$}\\
		\For{$k=1$ \KwTo $K$}{
			 {{$\bm{u}(k) \leftarrow \bm{u}(\anc(k)) + \bm{{\cal C}}_{l_j,\bm{H}}[\idx_{new}(k)] $}}\\
			$\idx(k) \leftarrow [\idx (\anc(k)), \idx_{new}(k)]$\\
		}
	}
        \%\% \textbf{{Looped $K$-best decoding:}}\\
	$\idx$ = \LoopedKBest($\idx, \{\bm{u}^{(V)}_k, \bm{s}_k^{(V)}\}_{k \in [1, K]}, \mathcal{O}, N_i^d)$ \\
	\list $\leftarrow$ \Reorder{$\idx$, $\mathcal{O}$}
	
        \%\% \textbf{CRC decoding:}\\
	\While{$T \neq 1$ and $k \le K$}{
		$\bm{\hat{c}}$ $\leftarrow$ \IdxToBits{\list$(k)$}\\
		$T$ $\leftarrow$ \CRCDecode{$\bm{\hat{c}}$}\\
	}

	\BlankLine
	\BlankLine
\end{algorithm}

\section{Learning a Channel Code for CIPSAC}\label{sec:codes_cisac}
In this section, we first introduce the training procedure of the learning-based NOS code \cite{nos} for the considered CIPSAC system. In particular, the loss function is parameterized by different hyper parameters to achieve different sensing and communication trade-offs. To further improve the PER performance, the CRC-assisted $K$-best decoding algorithm originally proposed in \cite{nos} is extended to the considered SIMO-OFDM channel.

\begin{figure}
\centering
\includegraphics[width=\columnwidth]{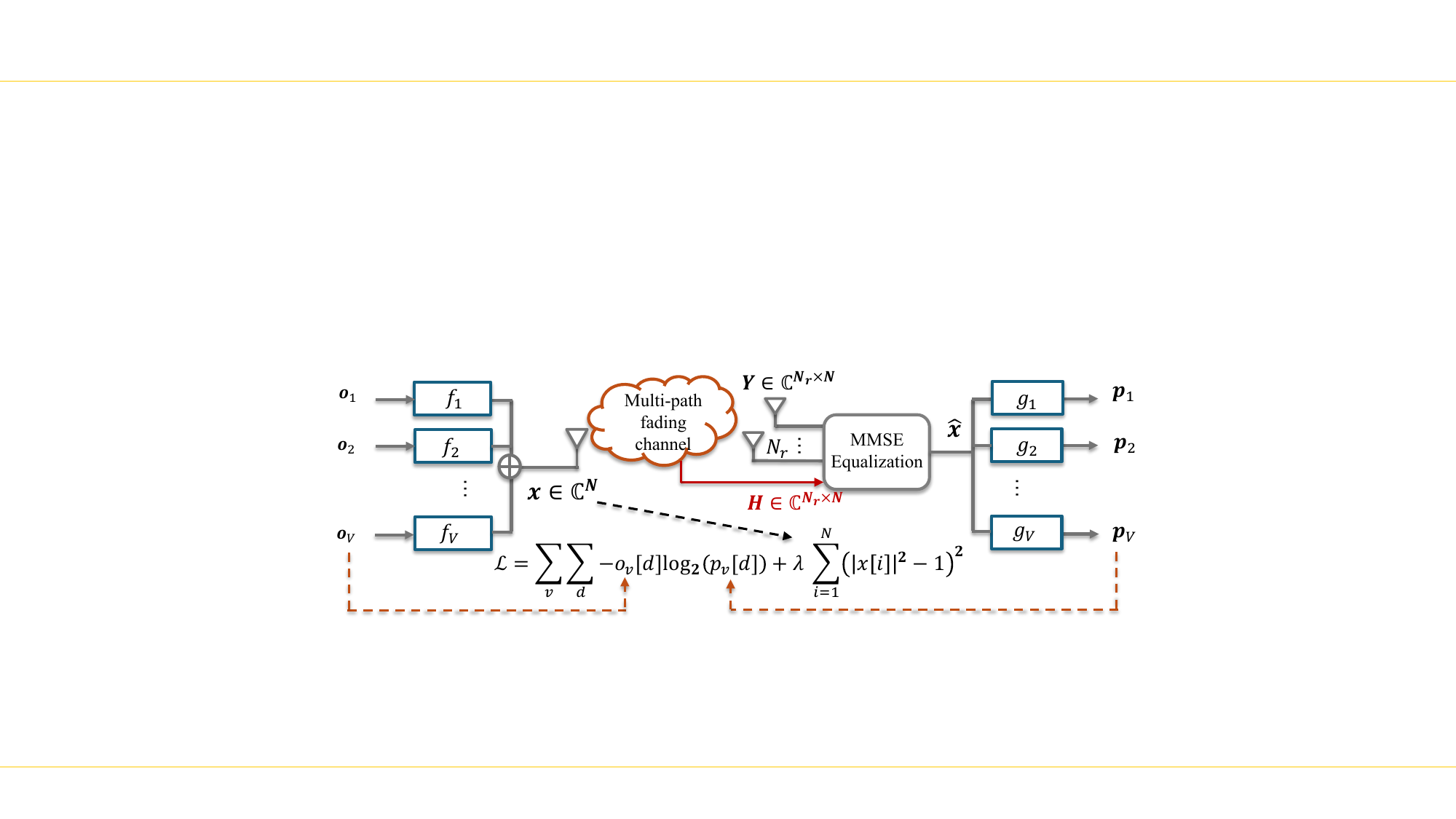}
\caption{The encoder and decoder of the NOS code parameterized by neural networks over the SIMO-OFDM channel. The overall loss function is a combination of the communication and the sensing losses weighted by a hyper parameter, $\lambda$.}
\label{fig:fig_nos}
\end{figure}

\subsection{Learning-based NOS Code with Communication and Sensing Trade-offs}\label{sec:nos_code}

\subsubsection{Learning a NOS code}
Though it has been shown in \cite{SPARC2014} that adopting the random Gaussian codebook achieves satisfactory decoding performances, there is no guarantee that the codebook is optimal, especially in the short block length regime and for the considered scenario. Thus, we try to improve both the sensing and decoding performances by adopting a codebook obtained via end-to-end training. The neural network architecture of the NOS code over an SIMO-OFDM channel is shown in Fig. \ref{fig:fig_nos} and is illustrated as follows.

\textbf{NOS encoding:}
To start with, we transform the $V$ indices, $d_v \in [1, D]$, into one-hot vectors, {$\bm{o}_v$}, which are fed to the corresponding neural network-based encoders {$f_v$} to generate real-valued vectors {$\tilde{\bm{x}}_v={f_v}(\bm{o}_v)$} of length ${2N}$. Each ${f_v, v\in [1,V]}$ has the same neural network architecture \cite{mimo_nos} that consists of linear layers, batch normalization layers, and non-linear activation functions. Since each {$\tilde{\bm{x}}_v$} conveys the same amount of information,  {we assign the same energy {$||\tilde{\bm{x}}_v||^2_2=\frac{{N}}{V}$} to each {$\tilde{\bm{x}}_v$}.} Instead of transmitting a real-valued signal, we convert the length-${2N}$ real-valued vector {$\tilde{\bm{x}}_v$} into a complex vector {$\bm{x}_v = \tilde{\bm{x}}_v^R + j \tilde{\bm{x}}_v^I$} {where $\tilde{\bm{x}}_v^R$ and $\tilde{\bm{x}}_v^I$ are of dimension-$N$ which correspond to the first and the remaining $N$ elements of $\tilde{\bm{x}}_v$, respectively.} The superimposed signal $\bm{x}$ is obtained by adding all {$\bm{x}_v, v \in [1, V]$}:
\begin{align}
\bm{x} = \sum_{v=1}^{{V}} \bm{x}_v.
\label{equ:trans_x}
\end{align}

Then, the encoded signal, $\bm{x}$ is transmitted over the SIMO-OFDM channel with CSI, $\bm{H}$, generated according to \eqref{equ:def_csi} and the received signal, $\bm{Y} \in \mathbb{C}^{N_r\times N}$, is fed to the NOS decoder which is detailed as follows. 

\textbf{NOS decoding:}
After passing the channel, the receiver first performs MMSE equalization to obtain an estimate, $\hat{\bm{x}}$ of the transmitted signal, $\bm{x}$ using CSI.  The MMSE equalization is performed for each of the OFDM subcarriers and for the $n$-th subcarrier, we have:
\begin{align}
\tilde{x}_n = (\bm{h}_n^\dagger \bm{h}_n + \sigma_d^2)^{-1} \bm{h}_n^\dagger \bm{y}_n,
\label{eq:MMSE_EQU}
\end{align}
where $\bm{y}_n, \bm{h}_n \in \mathbb{C}^{N_r}$ denote the received signal and CSI at the $n$-th OFDM subcarrier, respectively and $\sigma_d^2$ is the noise variance defined in \eqref{equ:real_yd}.
After obtaining the equalized signal, $\tilde{\bm{x}}$, we convert it into a real-valued vector, $\hat{\bm{x}} \in \mathbb{R}^{2N}$ for the subsequent NOS decoder which adopts a set of decoding functions, denoted by $g_v(\cdot), v \in [1, V]$, and each of the function is parameterized by the same neural network architecture which mirrors that of the NOS encoder.   The decoding function, $g_v$ corresponding to a specific $v$ produces a length-$D$ probability vector {$\bm{p}_v = {g_v}(\bm{\hat{x}})$} where {${p}_v[d]$} represents the probability of {${o}_v[d]=1$}.

The training of the encoding and decoding functions, ${f_v}$ and ${g_v}$, of the NOS code, is performed by optimizing the  cross-entropy loss for each pair of the one-hot input {$\bm{o}_v$} and the probability vector {$\bm{p}_v$}. The total loss is the summation of the pairwise losses:
\begin{align}
 \mathcal{L}_c = -\sum_{v=1}^{V} {\sum_{d=1}^{D}{{o}_v[d] \log({p}_v[d])}}.
\label{eq:classify_loss}
\end{align}
{We randomly generate different CRC coded bit sequences $\bm{c}$ and SIMO-OFDM channel realizations $\bm{H}$ in the training phase, and adopt the ADAM optimizer to minimize the loss in \eqref{eq:classify_loss} corresponding to different $\bm{c}$ and $\bm{H}$ realizations to train the proposed neural networks.}


\subsubsection{Achieving different trade-offs}
The NOS code discussed above is trained to minimize the decoding error as indicated in the loss function, \eqref{eq:classify_loss}. However, we note that, in the considered CIPSAC scenario, the sensing capability of the designed waveform should also be taken into account. It is shown in \cite{tit_isac} that there is a trade-off between the sensing and communication performances. In particular, deterministic signal with constant amplitude waveform serves as a good candidate for sensing, yet stochastic signal with fluctuating amplitude is more suitable for communications thanks to their capability to carry more information.
Following this intuition, we try to design different NOS codes which achieve different sensing and decoding performances.

To fulfill this, we introduce a loss function corresponding to the sensing loss. In particular, we add a constraint to the transmitted signal, $\bm{x}$, defined in \eqref{equ:trans_x} to have less fluctuated amplitude compared to the constant (unitary) amplitude signals:
\begin{equation}
    \mathcal{L}_s = \sum_{i=1}^N (|{x}[i]|^2 - 1)^2.
    \label{equ:sense_loss}
\end{equation}
To achieve different trade-offs between the sensing and the decoding performances, a hyper parameter, $\lambda$, is introduced and the overall loss function can be expressed as:
\begin{equation}
    \mathcal{L} = \mathcal{L}_c + \lambda \mathcal{L}_s.
    \label{equ:overall_loss}
\end{equation}
It is easy to understand that, the NOS framework with a large $\lambda$ value would focus on producing the codewords with near constant amplitudes leading to superior sensing performance, while the framework with a small $\lambda$ value would mainly focus on the PER performance. 
We train different NOS models with different $\lambda$ values and the training process is identical to the aforementioned NOS code.
We will show in the experiments that adopting a suitable $\lambda$ value is essential for superior sensing and decoding performances.

\begin{remark}
    We note that the design of the sensing loss in \eqref{equ:sense_loss} is largely based on the intuitions which does not guarantee optimality. A more advanced scheme involves directly adopting the MSE defined in \eqref{equ:overall_mse} as $\mathcal{L}_s$ and optimize \eqref{equ:overall_loss} using the updated $\mathcal{L}_s$. However, the acquisition of the updated $\mathcal{L}_s$ also includes non-differential operations and is impossible to optimize in an end-to-end fashion. 
\end{remark}

\subsection{CRC-Assisted NOS Code}\label{sec:crc_sparc}
We illustrate the encoding and decoding processes of the CRC-assisted NOS code with a learned codebook. Note that the processes are the same for the original SPARC code where each entry follows an i.i.d. complex Gaussian distribution.
\subsubsection{NOS encoding}
The encoding process is comprised of two parts, namely, the CRC and NOS encoding. To start with, the input bit sequence, denoted by $\bm{b} \in \{0, 1\}^{N_b}$, is fed to the CRC encoding block where $N_{crc}$ extra CRC bits are appended to $\bm{b}$ to produce a new bit sequence $\bm{c} \in \{0, 1\}^{V\log_2(D)}$ for the subsequent NOS encoding where $V\log_2(D) = N_b + N_{crc}$. Then, $\bm{c}$ is partitioned into $V$ blocks, each is comprised of $\log_2(D)$ bits.\footnote{We assume $D$ is a power of 2.} The NOS code is characterized by its codebook, $\bm{\mathcal{C}} \in \mathbb{C}^{V\times D \times N}$, which is obtained according to \cite{mimo_nos}.
The $\log_2(D)$ bits of the $v$-th block is mapped to an index,  $d_v \in [1, D]$, which corresponds to the length-$N$ sub-codeword, $\bm{\mathcal{C}}_v[d_v]$. By adding up all $V$ sub-codewords, the transmitted codeword is obtained as:
\begin{equation}
\bm{x} = \sum_{v = 1}^V \bm{\mathcal{C}}_v[d_v].
\label{equ:sparc_enc}
\end{equation}
We can easily verify that $\mathbb{E}(\bm{x}^\dagger \bm{x}) = N$ using the i.i.d. property of the elements within the codebook.

\subsubsection{NOS decoding}
Then, we present channel decoding over an SIMO-OFDM channel for the proposed CIPSAC system. We provide a brief overview of the decoding algorithm here, and refer the readers to \cite{mimo_nos} for more details.

\textbf{Codebook update:}
We assume the CSI, denoted as $\bm{H} \in \mathbb{C}^{N_r \times N}$, is perfectly available at the decoder.\footnote{The SPARC decoding algorithm is still valid when there is an estimation error in $\bm{H}$.} The codebook, $\bm{\mathcal{C}}$, is updated w.r.t. $\bm{H}$ as:
\begin{align}
\widetilde{\bm{\mathcal{C}}}_{v, \bm{H}}[d, r] &= \bm{H}[r, :] \odot \bm{\mathcal{C}}_v[d], \quad 
\notag \\
\bm{\mathcal{C}}_{v, \bm{H}}[d] &= \text{vec}(\widetilde{\bm{\mathcal{C}}}_{v, \bm{H}}[d, :])
\label{equ:update_Ch}
\end{align}
where $\widetilde{\bm{\mathcal{C}}}_{\bm{H}} \in \mathbb{C}^{V\times D \times N_r \times N}$ denotes the intermediate codebook, $\text{vec}(\cdot)$ represents the vectorization operation and the updated codebook, $\bm{\mathcal{C}}_{\bm{H}} \in \mathbb{C}^{V \times D \times N_r N}$.

\textbf{$K$-best decoding:}
We then illustrate the procedure to decode SPARC codeword. To be precise, the posterior probability of the SPARC codeword can be expressed as:
\begin{align}
    P({d_1}&,\ldots, {d_{V}}|\bm{y}, \bm{H}) \propto \notag \\ &\exp\{-\frac{1}{2\sigma_d^2} ||\bm{y}-\sum_{v=1}^{V}{\bm{\mathcal{C}}_{v,\bm{H}}[d_v]}||^2_2\},
\label{eq:joint prob}
\end{align} 
where $\bm{y} = \text{vec}(\bm{Y})$ and is of length $N_r N$.
The decoding objective is to find a combination of indices, $\{d_1, d_2, \ldots, d_V\}$ that maximizes \eqref{eq:joint prob}, or equivalently, minimizes the L2 distance, $||\bm{y}-\sum_{v=1}^{V}{\bm{\mathcal{C}}_{v,\bm{H}}[d_v]}||^2_2$.

Similar with \cite{mimo_nos}, we define the \textit{score metric}, $s^{(l)} = {||\bm{y}-\sum_{i=1}^l{\bm{\mathcal{C}}_{i, \bm{H}}[d_i]}||^2_2}$, which can be expressed recursively as:
\begin{align}
      s^{(l)} &= s^{(l-1)}  + ||\bm{\mathcal{C}}_{l, \bm{H}}[d_l]||_2^2 
  \notag \\  &+ 2\Re(\bm{\mathcal{C}}_{l, \bm{H}}^\dagger[d_l]\bm{u}^{(l-1)}- \bm{\mathcal{C}}_{l, \bm{H}}^\dagger[d_l]\bm{y}),
\label{eq:recursive}
\end{align} 
where {{$\bm{u}^{(l-1)} = \sum_{i=1}^{l-1}{\bm{\mathcal{C}}_{i,\bm{H}}[d_i]}$}} is the \textit{cumulative vector}. It is easy to show that $s^{(V)}$ is the L2 distance of interest, i.e., $s^{(V)} = ||\bm{y}-\sum_{v=1}^{V}{\bm{\mathcal{C}}_{v,\bm{H}}[d_v]}||^2_2$. 

The maximum a posterior (MAP) solution can be obtained by checking all possible $\{d_1, d_2, \ldots, d_V\}$ combinations leading to an overwhelmingly high complexity. To this end, we  introduce the $K$-best decoding algorithm as a low complexity alternative. In particular, for each layer, we only preserve $K$ candidates, and prune the others. The process starts from the root of the tree with a score initialized to $s^{(0)} = 0$. In the $l$-th layer, the $k$-th surviving node in the previous layer with accumulated indices $(d_1^k , \ldots, d_{l-1}^k)$ is extended to $D$ child nodes and the score of its $d_l$-th child, can be calculated as:
\begin{align}
         s^{(l)}_{d^k_1,\ldots,d^k_{l-1},d_l} &= s^{(l-1)}_{d^k_1,\ldots,d^k_{l-1}}  + ||\bm{\mathcal{C}}_{l,\bm{H}}[d_l]||_2^2 \notag \\
   &+ 2\Re({\bm{\mathcal{C}}}^\dagger_{l,\bm{H}}[d_l]\bm{u}^{(l-1)}_k-\bm{\mathcal{C}}^\dagger_{l,\bm{H}}[d_l]\bm{y}),
\label{eq:kbest_metric}
\end{align}
where  {{$\bm{u}^{(l-1)}_k = \sum_{i=1}^{l-1}{\bm{\mathcal{C}}_{i,\bm{H}}[d^k_i]}$}}. After obtaining $KD$ metrics belonging to all the $K$ surviving nodes in the previous layer, we select the $K$ smallest candidates and prune the others. By iteratively extending and pruning the tree, we can obtain $K$ candidates at the final layer, where the $k$-th candidate can be characterized by its accumulated indices denoted by $(d_1^k, \ldots, d_V^k)$. It is also associated with its accumulated vector, $\bm{u}_k^{(V)}$, and the metric, $s_k^{(V)} \triangleq s_{d^k_1,\ldots,d^k_{V}}^{(V)}$.

It is worth mentioning that the decoding order of the layers makes a difference and we adopt the \textit{per-layer} sorting in \cite{mimo_nos}, where the layers with more `reliable' candidates should be decoded earlier to prevent error propagation to the remaining layers. As a result, the true decoding order, $\mathcal{O}$, is a permutation of $[1, 2, \ldots, V]$.

We also implement `looped $K$-best decoding', proposed in \cite{mimo_nos}, with $N_i^d$ extra iterations to further improve the PER performance. In particular, the looped K-best decoder takes the output of the original K-best decoding algorithm as input and outputs the updated indices of the $K$ candidates, which can be expressed as:
\begin{align}
\{(d_1^k, &\ldots, d_V^k)\}_{k \in [1, K]} = \mathrm{loopedKbest}( \notag \\ 
&\{(d_1^k, \ldots, d_V^k), \bm{u}^{(V)}_k, \bm{s}_k^{(V)}\}_{k \in [1, K]}, \mathcal{O}, N_i^d).
\label{equ:looped_kbest}
\end{align}
We refer the interested readers to \cite{mimo_nos} for more details of the looped K-best decoding procedure. 

Finally, for the $k$-th candidate of the {looped K-best} decoder,  we convert each of its $V$ indices, $(d_1^k, \ldots, d_V^k)$ into a bit sequence, denoted by $\hat{\bm{c}}^v_k, v \in [1, V]$. Then, the overall decoded bit sequence corresponding the the $k$-th candidate is obtained by concatenating all $V$ bit sequences together, $\hat{\bm{c}}_k = (\hat{\bm{c}}_k^1, \ldots, \hat{\bm{c}}_k^V)$. We apply CRC to each of the $K$ bit sequences, $\hat{\bm{c}}_k, k \in [1, K]$, in sequential order and the first bit sequence which passes the CRC will be served as the final decoded output, $\bm{\hat{c}}$. If none of the $K$ bit sequences passes the CRC, we set the error flag, $T$ to 0, otherwise set $T = 1$.
The entire CRC-aided $K$-best decoding algorithm is summarized in Algorithm \ref{algorithm:decode}. Note that we use the notation $\text{idx}(k)$ to represent the indices, $(d_1^k, \ldots, d_V^k)$ in the algorithm.

\begin{remark}
    We note that the NOS decoders illustrated in Section \ref{sec:nos_code} differ from the aforementioned CRC-assisted $K$-best decoding process. This is due to the fact that the CRC-assisted $K$-best decoding involves iteratively extending and pruning the branches of the $K$-best tree which is non-differential for back propagation of the gradients. We leave it as a future work to train a better NOS codebook for superior decoding performance.
\end{remark}


\newcolumntype{Y}{>{\centering\arraybackslash}X}

\begin{table}[t]

\caption{{List of key variables.}}
\centering
 \begin{tabularx}{\linewidth}{| p{1.25cm} | Y |} 
 \hline

 Variable & Description \\  \hline

 $N, N_G$ & The number of OFDM subcarriers and guard subcarriers in one OFDM symbol.  \\  \hline
 $M$ & Minimum number of OFDM symbols to achieve the desired Doppler resolution.  \\  \hline
 $M_p, M_d$ & The number of OFDM symbols for the pilot and the data packets.  \\  \hline
 $N_r$ & Number of antennas at the receiver.\\ \hline
 {$L$} & Number of targets/paths.  \\  \hline
 {$N_i$} & Number of iterations for the iterative parameter sensing and channel decoding algorithm.  \\  \hline
 {$K$} & The number of survivors kept in each layer of the $K$-best tree search. \\  \hline
 {$N_i^d$} & Number of additional iterations for the looped $K$-best algorithm.  \\  \hline
 {$V, D$} & Number of encoder-decoder pairs and the dimension for each one-hot input, $\bm{o}_v$, adopted in the NOS scheme.  \\  \hline
 {$N_b, N_{crc}$} & The length of the information bit sequence, $\bm{b}$, and the number of CRC bits.  \\  \hline
 {$\sigma^2_p, \sigma^2_d$} &  The noise variances for the pilot and data OFDM symbols defined in \eqref{equ:def_snr} and \eqref{equ:real_yd}, respectively. \\  \hline

 $\alpha, \tau, \mu, \theta$ & The RCS, delay, Doppler and AoA parameters for a sensing target. \\  \hline 
 $\lambda$ & The hyper parameter to achieve different trade-offs between sensing and communication for NOS code. \\  \hline 
 $\bm{\mathcal{{C}}}$ & The NOS codebook which can be either randomly generated or obtained via end-to-end training. \\  \hline 
 $\bm{X}, \bm{X}_p, \bm{X}_d$ & The entire transmitted signal, $\bm{X}$, is obtained by concatenating the pilot and data signals, $\bm{X}_p$ and $\bm{X}_d$.\\  \hline 
 $\bm{Y}, \bm{Y}_p, \bm{Y}_d$ & The entire received signal, $\bm{Y}$, is obtained by concatenating the received pilot and data signals, $\bm{Y}_p$ and $\bm{Y}_d$. \\  \hline 
 \end{tabularx}
\label{tab:list_variables}

\end{table}

\section{Numerical Experiments}\label{sec:experiment}
We illustrate the effectiveness of the proposed CIPSAC system and the IPSCD algorithm under different experimental setups. Unless otherwise mentioned, we set the entries of the pilot signal $\bm{X}_p$ to unitary and the number of guard subcarriers equals to $N_G = N/4$.
An 11-bit CRC with generator polynomial, $x^{11}+x^{10}+x^9+x^5+1$ is adopted for all coded modulation schemes. Moreover,  we set the number of surviving candidates, $K = 16$ at each layer and the number of extra iterations, $N_i^d = V$, for the SPARC/NOS decoding. Finally, the  training parameters of the NOS code is identical to that in \cite{mimo_nos}.

\subsection{Static SISO}
In the static SISO setup, the scatters are assumed to be static and the receiver is equipped with a single antenna, thus, only two parameters, the RCS, $\alpha_\ell$ and delay, $\tau_\ell$ need to be estimated.  We perform extensive numerical experiment in this simplified setup to justify the gain by introducing channel codes for integrated passive sensing and communications compared with existing uncoded schemes \cite{two_stage_ipsac, isac_learn}. Ablation studies are also carried out to figure out the effect of different system parameters on the sensing and communication performances which guide the parameter selection in the more challenging mobile SISO and mobile SIMO scenarios.
We consider short packet transmission where each data packet is comprised of 13 information bits with the aforementioned 11-bit CRC. The parameters of the SPARC are set to $V = 3, D = 256, N = 32$. The number of OFDM subcarriers is $N = 32$ and $L = 3$ scatters are considered.  Without loss of generality, we assume $\alpha_\ell \in \mathcal{CN}(0, \frac{1}{L}), \ell \in [1, L]$ and $n_\ell = \ell$, i.e., the delay occupies the first $L$ taps.

\begin{figure*}
     \centering
     \begin{subfigure}{0.64\columnwidth}
         \centering
         \includegraphics[width=\columnwidth]{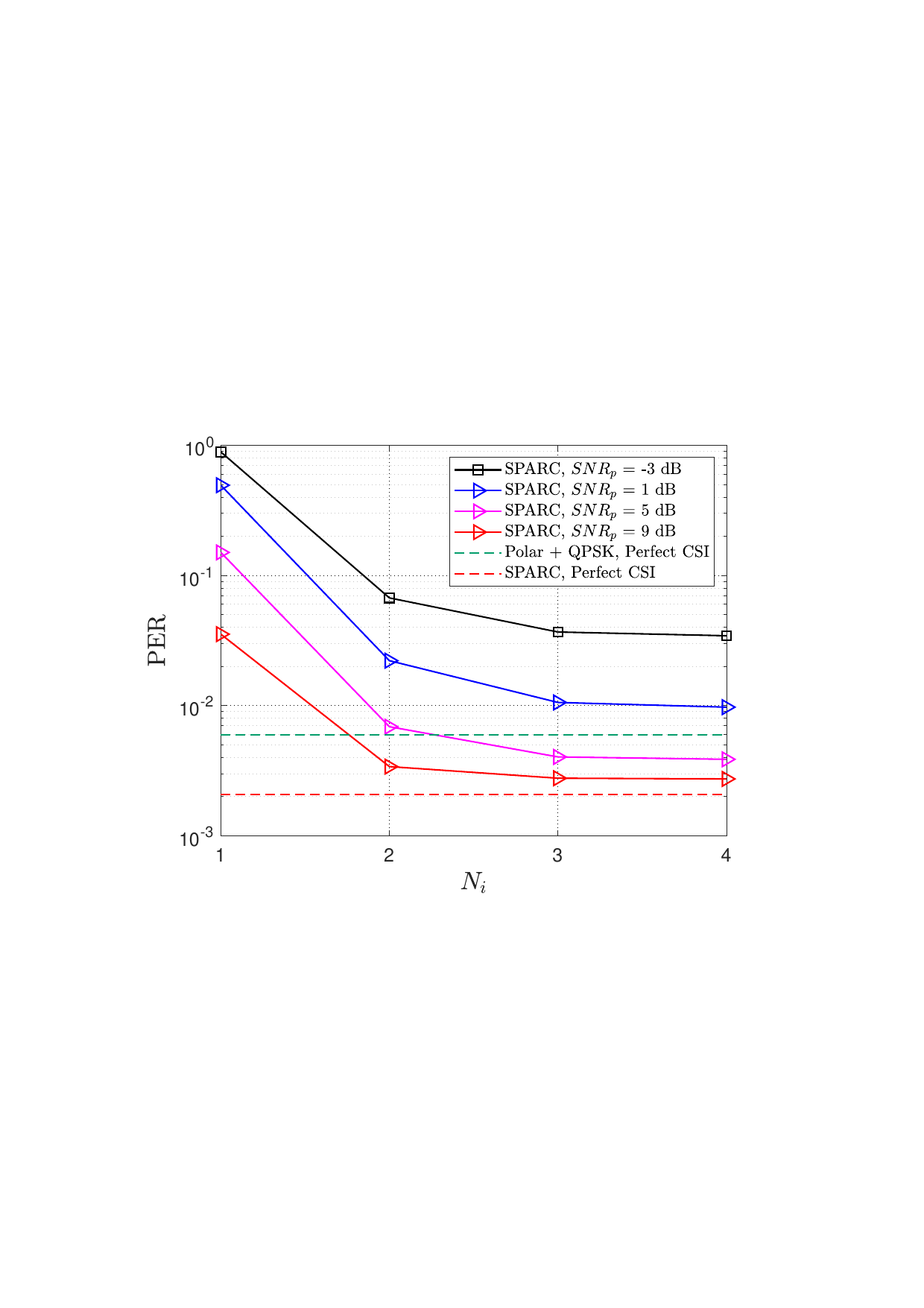}
         \caption{}
     \end{subfigure}
     \begin{subfigure}{0.62\columnwidth}
         \centering
         \includegraphics[width=\columnwidth]{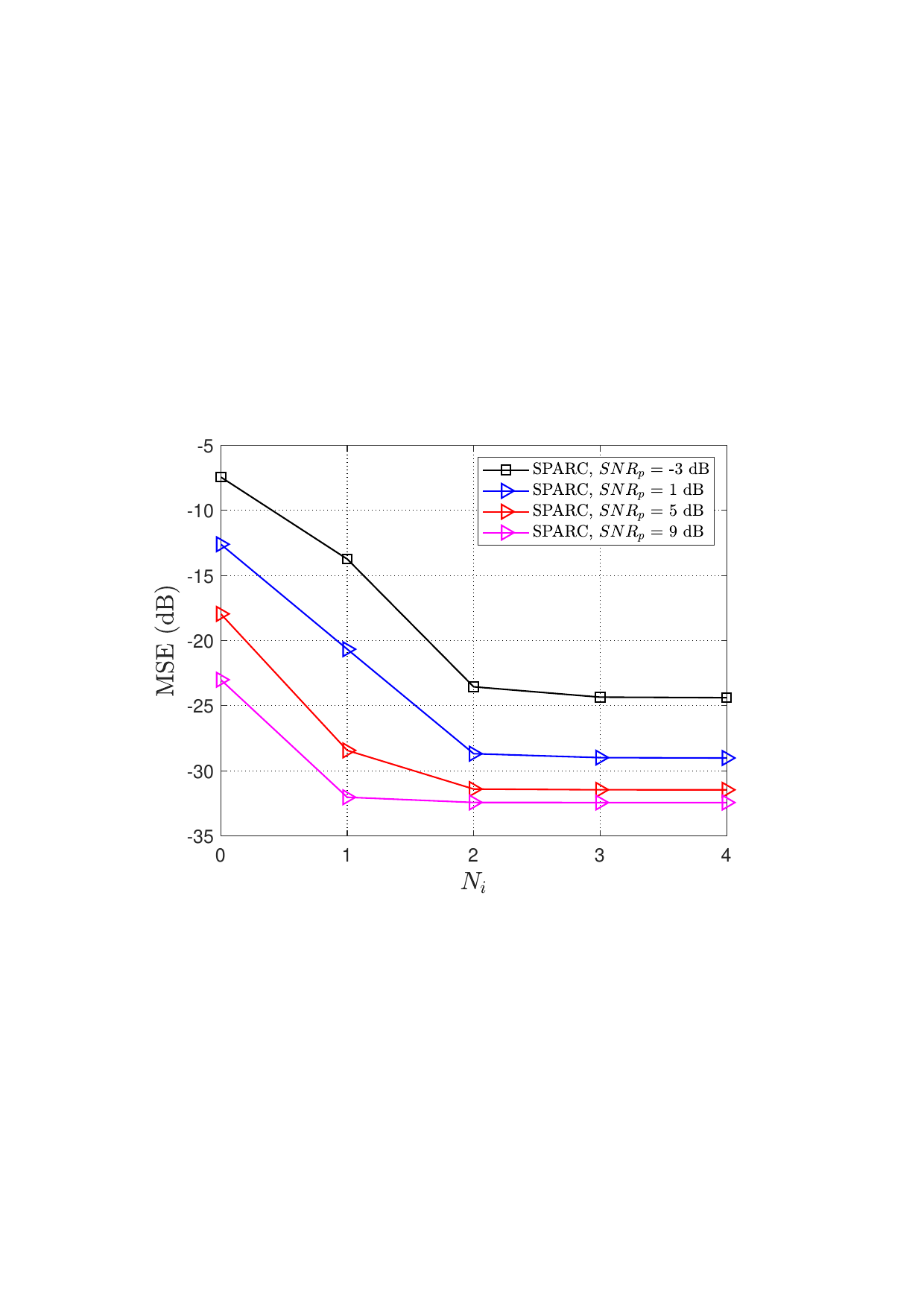}
         \caption{}
     \end{subfigure}
     \begin{subfigure}{0.66\columnwidth}
         \centering
         \includegraphics[width=\columnwidth]{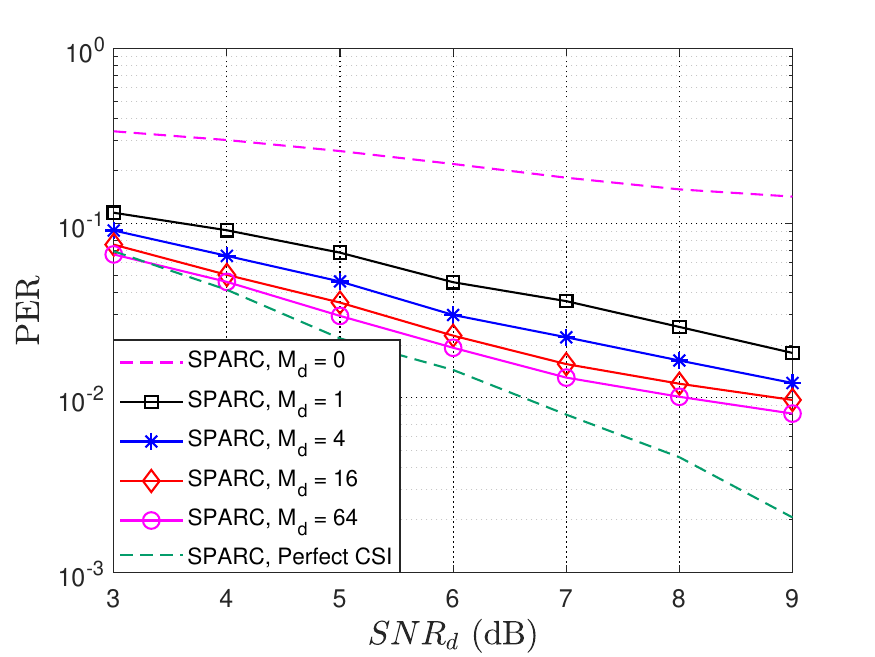}
         \caption{}
     \end{subfigure}
  \caption{Performance evaluation of the proposed CIPSAC system in the statio SISO scenario: (a) \& (b) the PER and MSE performances versus the number of iterations, $N_i$, with different $\mathrm{SNR}_p$ values and $M_d = 6$; (c) the PER performance with different numbers of blocks, $M_d$, where $\mathrm{SNR}_p = 1$ dB and $N_i = 4$.}
\label{fig:final_simu}
\end{figure*}

\begin{figure}
\centering
\includegraphics[width=0.9\columnwidth]{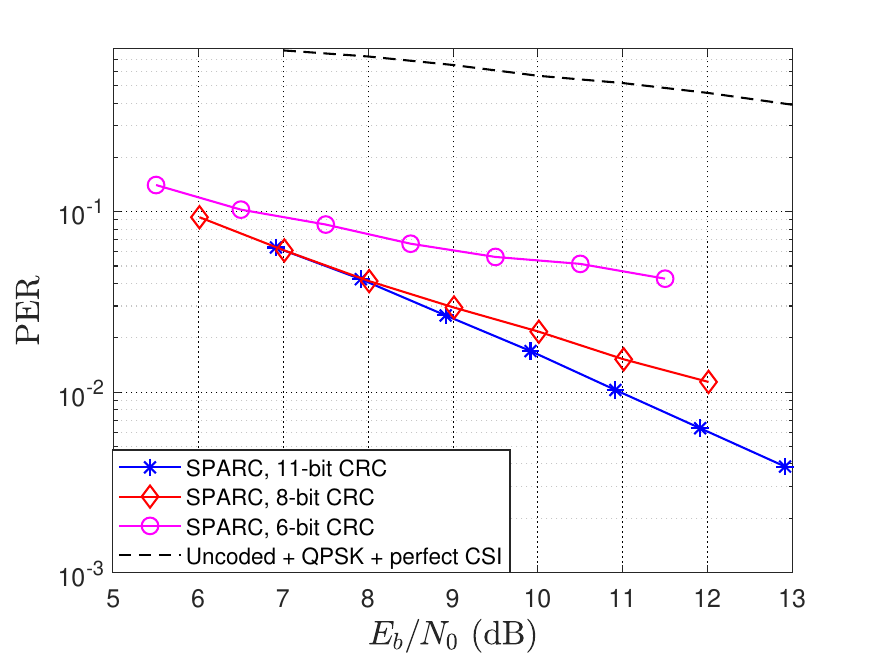}
\caption{The PER performance of the proposed CIPSAC system adopting SPARC code  is evaluated with different numbers of CRC bits. We also provide an uncoded benchmark to outline the effectiveness of the channel codes. It is worth emphasizing that we adopt $E_b/N_0$ as the x-axis for a fair comparison of the schemes with different spectrum efficiencies.}
\label{fig:abalation}
\end{figure}

\subsubsection{Performance improvement w.r.t. $N_i$}
We first illustrate the improvement brought by the number of iterations. In this simulation, different pilot SNR values, $\mathrm{SNR}_p \in \{-3, 1, 5, 9\}$ dB are considered with fixed $\mathrm{SNR}_d = 9$ dB and the number of pilot frames and data packets are set to $M_p = 1$ and $M_d = 6$, respectively. The PER and MSE performances are evaluated after collecting enough number of packet errors.   It can be seen from Fig. \ref{fig:final_simu} (a) and (b) that both the PER and the MSE improve significantly w.r.t $N_i$. In particular, Fig. \ref{fig:final_simu} (a) shows for all $\mathrm{SNR}_p$ values, the PER performance converges with merely $3$ iterations. We can also observe that the PER performance of the SPARC with $\mathrm{SNR}_p = 9$ dB and $N_i = 4$ nearly approaches the PER performance of the SPARC with perfect CSI. 

We also provide the PER performance of the Polar baseline to outline the effectiveness of the proposed SPARC. The Polar baseline utilizes the same 11-bit CRC and adopts SCL decoding algorithm \cite{scl_polar}. The decoded bit sequence is obtained after the MMSE equalization and Polar decoding. As can be seen in Fig. \ref{fig:final_simu} (a), the proposed SPARC outperforms the Polar baseline in terms of PER.

Fig. \ref{fig:final_simu} (b) manifests the sensing MSE w.r.t. $N_i$. Note that in this simplified static SISO scenario, the estimated delay $\hat{\tau}$ seldom deviates from the ground truth, ${\tau}$. Thus, we adopt the MSE defined in Eqn. (7) in \cite{crc_isac} which also takes the RCS parameter, $\alpha$, into account. The MSE is large when $N_i = 0$ which is due to the fact that we only use one pilot frame to estimate the parameters. When $N_i$ grows, a larger number of data packets are successfully decoded which are utilized for improved MSE performance.

\subsubsection{The effect of different number of data packets} 
{{we investigate the effect of different number of data packets, $M_d$, on the system performance. In this simulation, the pilot SNR is fixed at $\mathrm{SNR}_p = 1$ dB, while $M_d$ is selected from $M_d \in \{1, 4, 16, 64\}$. Intuitively, increasing $M_d$ enhances the probability of correctly decoding a larger number of data packets, which can then serve as additional pilot symbols, thereby improving the accuracy of parameter estimation. However, a larger $M_d$ also introduces higher latency, as the latency scales proportionally with $M_d$.}}

As can be seen in Fig. \ref{fig:final_simu} (c), the PER performance improves rapidly from $M_d = 0$ to $M_d = 4$ yet it nearly ceases to improve when $M_d \ge 16$. It is also shown in the figure that there is a gap between $M_d = 64$ and the SPARC with perfect CSI which is due to the CRC outage: the wrong SPARC re-encoded codewords would degrade the parameter estimate which hinders the final PER performance. This will be verified in the subsequent simulations concerning the outage probability, $P_o$.

\begin{table}[tbp]
\caption{The outage probabilities, $P_o$, for three CRC settings with different SNR values.}
\begin{center}
\begin{tabular}{c|c|c|c}
\hline

\cline{1-4} 
$\mathrm{SNR}_d$ (dB) & \textbf{3}& \textbf{6}& \textbf{9} \\
\hline

$P_o$ w/ 6-bit CRC & 0.102  & 0.058 & 0.040 \\
$P_o$ w/ 8-bit CRC & 0.028  & 0.015 & 0.009 \\
$P_o$ w/ 11-bit CRC & $2.2\times 10^{-3}$ & $1.7 \times 10^{-3}$  & $1.2\times 10^{-3}$ \\

\hline
\end{tabular}
\label{tab:crc_res}
\end{center}

\end{table}

\begin{figure*}
     \centering
     \begin{subfigure}{0.64\columnwidth}
         \centering
         \includegraphics[width=\columnwidth]{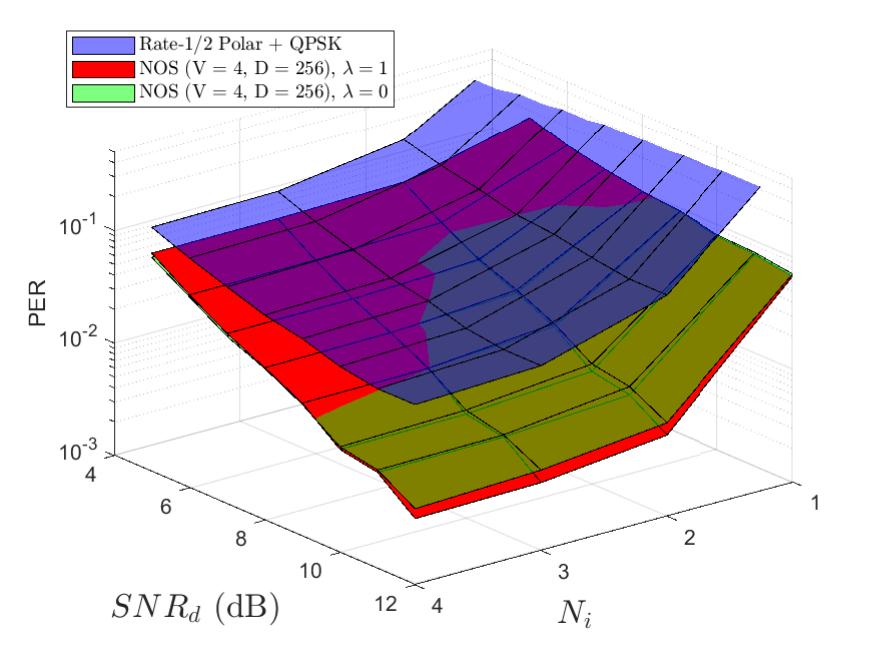}
         \caption{}
     \end{subfigure}
     \begin{subfigure}{0.64\columnwidth}
         \centering
         \includegraphics[width=\columnwidth]{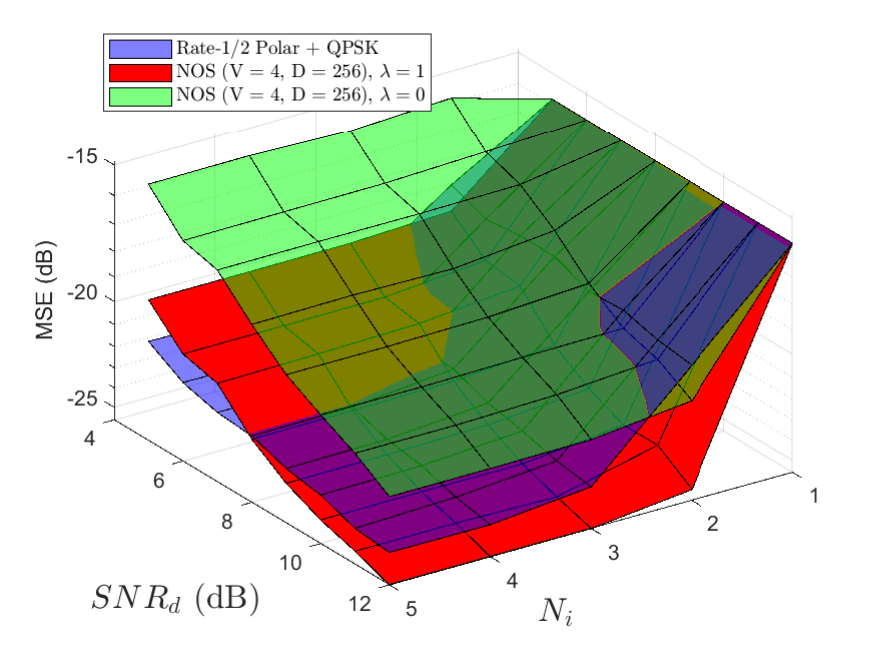}
         \caption{}
     \end{subfigure}
     \begin{subfigure}{0.64\columnwidth}
         \centering
         \includegraphics[width=\columnwidth]{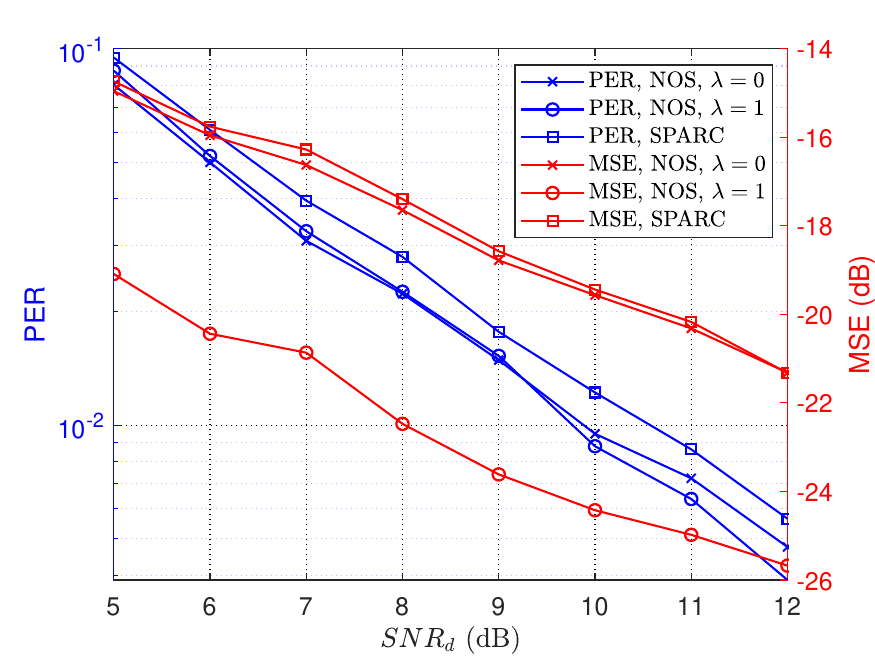}
         \caption{}
     \end{subfigure}
  \caption{Performance evaluation of the proposed IPSCD algorithm in the mobile SISO scenario. (a) \& (b) Comparison between the NOS codes trained under different $\lambda$ values and the Polar baseline in terms of PER and MSE. (c) The superiority of the NOS codes compared to the SPARC codes which adopt random Gaussian codebook \cite{SPARC2014}.}
\label{fig:simu_siso_2d}
\end{figure*}

\begin{figure}
\centering
\includegraphics[width=0.9\columnwidth]{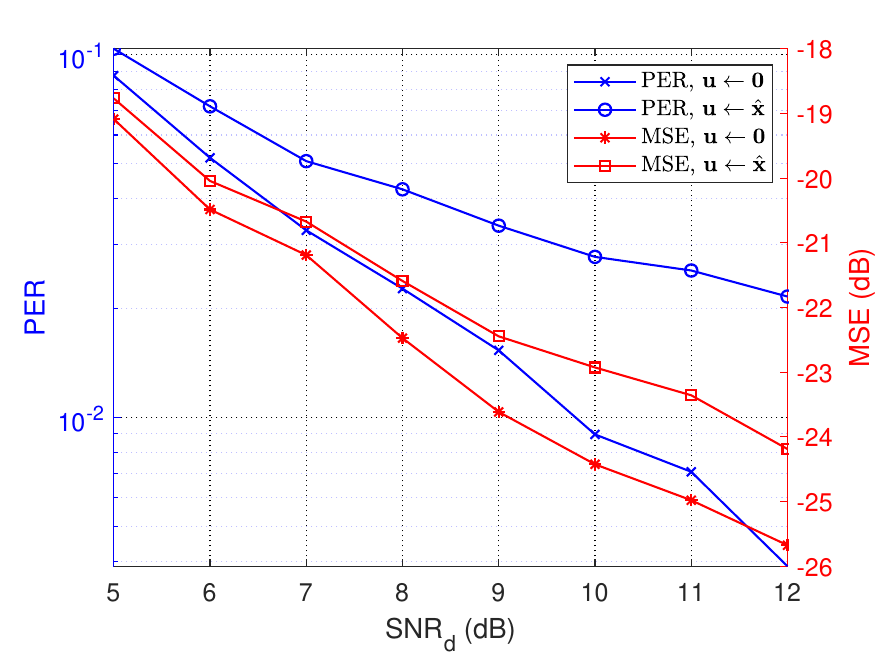}
\caption{The PER and MSE performances obtained by two different $\bm{u}$ realizations.}
\label{fig:u_policy}
\end{figure}

\begin{figure*}[t]
\centering
\includegraphics[width=\linewidth]{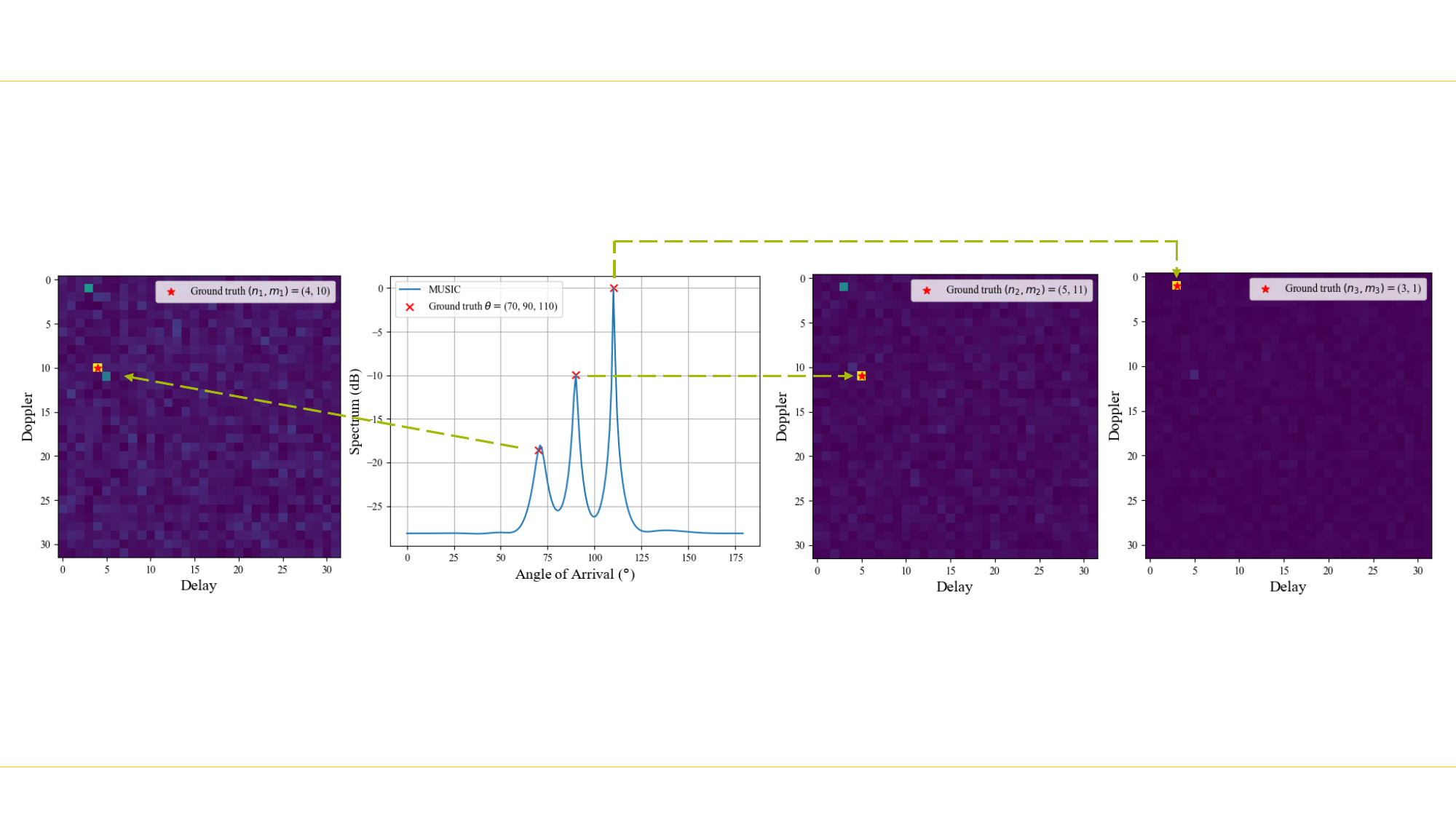}\\
\caption{Numerical illustration of the sensing algorithm shown in Fig. \ref{fig:sensing_flowchart}. The ground truth parameters are set to $\bm{\theta} = [70, 90, 110], \bm{n} = [4, 5, 3], \bm{m} = [10, 11, 1]$. It is observed from the left figure that accurate $n_1, m_1$ estimation are obtained even if $\hat{\theta}_1 = 71$ slightly deviates from $\theta_1 = 70$.}
\label{fig:sensing_visualize}
\end{figure*}

\begin{figure*}
     \centering
     \begin{subfigure}{0.64\columnwidth}
         \centering
         \includegraphics[width=\columnwidth]{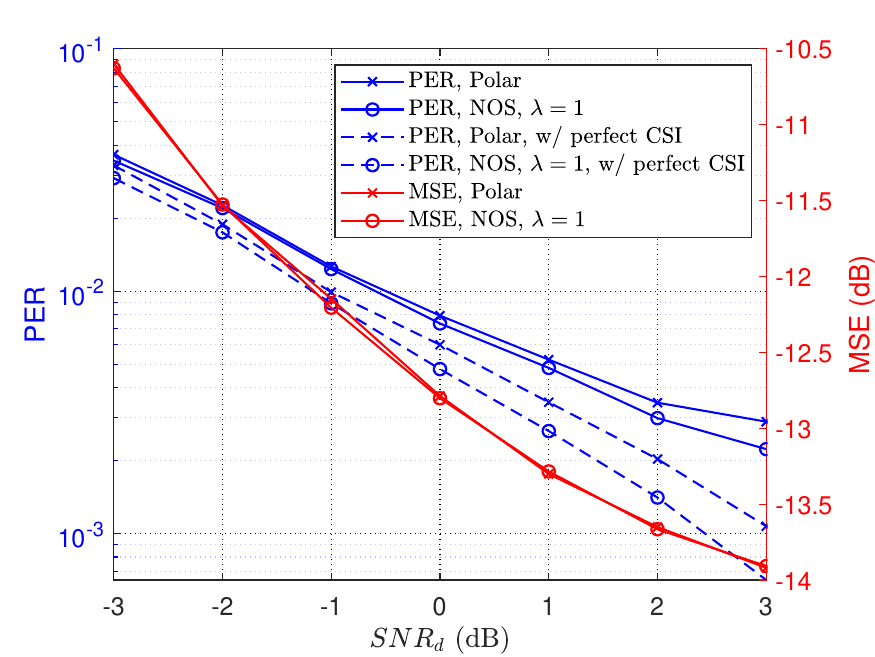}
         \caption{}
     \end{subfigure}
     \begin{subfigure}{0.64\columnwidth}
         \centering
         \includegraphics[width=\columnwidth]{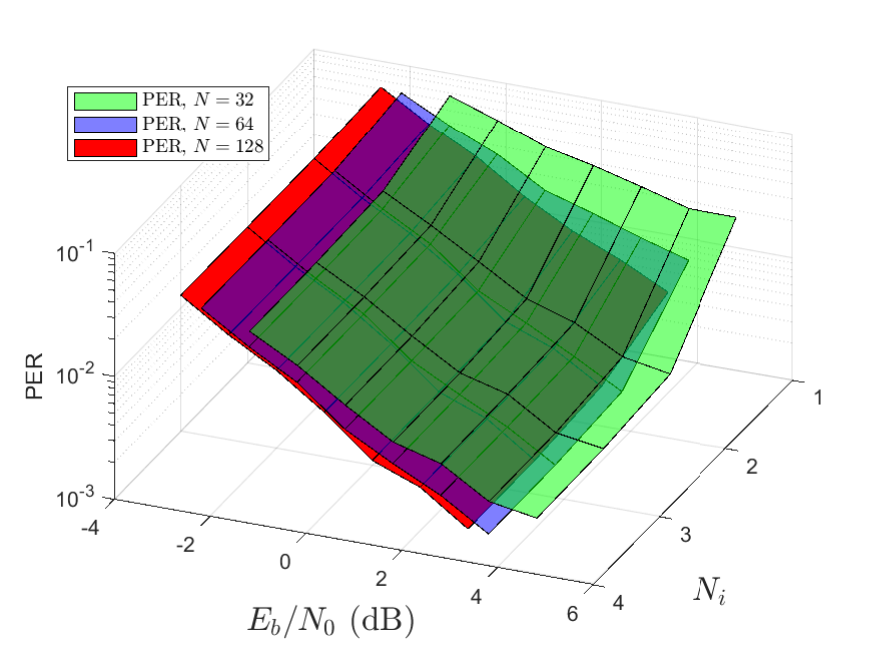}
         \caption{}
     \end{subfigure}
     \begin{subfigure}{0.64\columnwidth}
         \centering
         \includegraphics[width=\columnwidth]{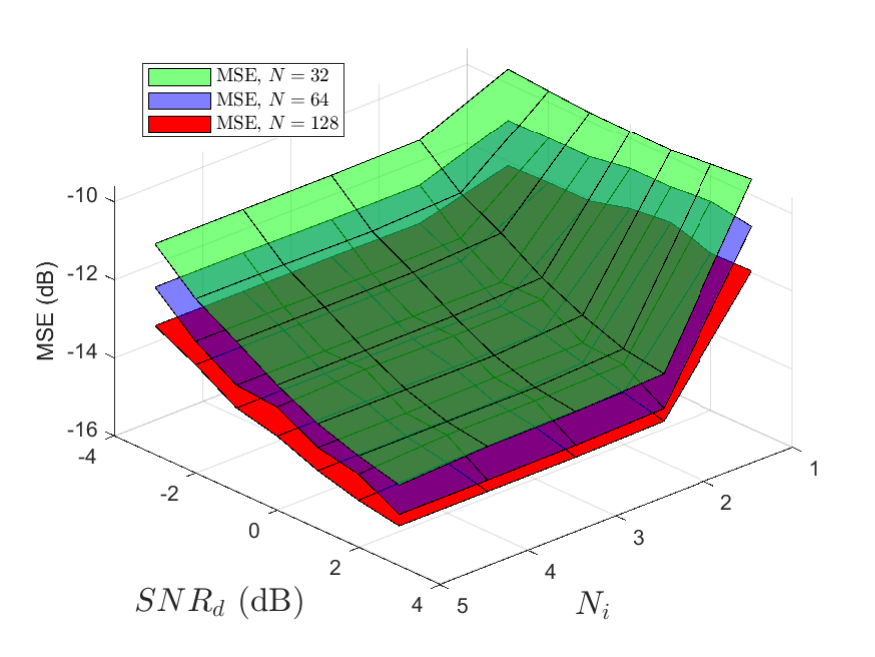}
         \caption{}
     \end{subfigure}
  \caption{Performance evaluation of the proposed IPSCD algorithm in the mobile SIMO scenario. (a) We show the comparison between the NOS and the Polar baseline in terms of PER and MSE for $N = 32$. (b) \& (c) For larger $N$ values, we adopt the Polar code with QPSK modulation and report its PER and MSE results.}
\label{fig:simu_simo_3d}
\end{figure*}

\subsubsection{Ablation studies} 
Finally, we provide ablation studies for a comprehensive understanding of the proposed IPSCD system under static SISO scenario. In particular, we fix the SPARC parameters, i.e., $(V = 3, D = 256, N = 32)$ and change different CRC settings. In particular, we consider 6-bit and 8-bit CRC with generator polynomials, $x^6 + x^5 + 1$ and $x^8 + x^2 + x + 1$, respectively. The PER performance of the uncoded QPSK symbols with perfect CSI is also provided to show the necessity of introducing the channel codes. For all the schemes, we set the number of blocks, $M_d = 6$ and the pilot SNR, $\mathrm{SNR}_p = 5$ dB. It is worth mentioning that, to provide a fair comparison for the schemes with different CRC lengths (leading to different spectrum efficiencies), we use $E_b/N_0$ instead of $\mathrm{SNR}$ as the x-axis.

As shown in Fig. \ref{fig:abalation}, without the aid of channel codes, the PER of the uncoded QPSK symbols is much higher than that using channel codes.
It is also shown that having a shorter CRC would harm the system performance, this is due to the fact that the outage probabilities, $P_o$, for both the 6-bit and 8-bit CRC settings are relatively high as shown in Table \ref{tab:crc_res}. This would lead to wrong SPARC re-encoded symbols, $\bm{\hat{x}}_b$, and harms the subsequent parameter estimation and CSI reconstruction performance. The 11-bit CRC has a much smaller $P_o$, thus, it has the best PER performance.

Finally, we provide the outrage probability, $P_o$, w.r.t. the channel SNR of the data OFDM symbols, $\mathrm{SNR}_d$ which is shown in Table \ref{tab:crc_res}. It can be seen that for all the three different CRC setups,  $P_o$  decreases when  $\mathrm{SNR}_d$ increases which is because the proposed SPARC decoding algorithm produces more reliable bit sequences with higher $\mathrm{SNR}_d$, thus, the bit sequence which passes the CRC would be more likely to be the ground truth.

\subsection{Mobile SISO}
In this scenario, an extra Doppler parameter, $\mu_\ell$, needs to be estimated which is assumed to be randomly distributed, i.e., $m_\ell \in [-\frac{M}{2}, \frac{M}{2}-1]$ where $M = 32$.  The parameters for the SPARC/NOS code are set to $V = 4, D = 256, N = 32$. 

\subsubsection{Performance comparison with different channel codes}
We first evaluate the MSE and PER performances achieved by different channel codes. In this simulation, the Polar code and the NOS code trained at different $\lambda \in \{0, 1\}$ values are considered. The numbers of OFDM symbols for the pilot and data packets are set to $M_p = 4$ and $M_d = 28$, respectively. Moreover, we set the SNR of the pilot signal to $\mathrm{SNR}_p = 7$ dB and vary the SNR of the data signal $\mathrm{SNR}_d \in [5, 12]$ dB. 
Their relative PER and MSE performances are shown in Fig. \ref{fig:simu_siso_2d} (a) \& (b), respectively. As can be seen, for all the considered channel coded modulation schemes, both the communication and sensing performances improve with more iterations $N_i$ and higher $\mathrm{SNR}_d$ values.\footnote{{This is only except for the NOS code trained under $\lambda = 0$ and evaluated at low $\mathrm{SNR}_d = 5$ dB where the MSE increases w.r.t. the sensing performance obtained from the pilot signal (i.e., $N_i = 0$). This can be explained by the fact that the codeword belonging to the NOS code trained under $\lambda = 0$ deviates a lot from the constant amplitude codeword leading to sub-optimal sensing performance.}}  This verifies the effectiveness of the proposed CIPSAC system in the mobile SISO scenario.  
It is also observed that the NOS code with $\lambda = 1$ outperforms the Polar baseline in terms of PER for all $\mathrm{SNR}_d$ and $N_i$ values. As for the MSE metric, it is shown that thanks to the constant amplitude waveform, the Polar baseline slightly outperforms the NOS code with $\lambda = 1$ when $\mathrm{SNR}_d$ is small, yet it is surpassed when $\mathrm{SNR}_d$ becomes larger. This is due to the fact that the NOS code exhibits much lower PER with a high $\mathrm{SNR}_d$ value leading to a larger number of correctly decoded data symbols for a more accurate parameter estimation.  The NOS code with $\lambda = 0$, on the other hand, achieves comparable PER performance w.r.t. that with $\lambda = 1$ yet the worst sensing performance. In particular, the NOS code with $\lambda = 0$ yields superior PER performance with low $\mathrm{SNR}_d$ value compared to that with $\lambda = 1$ because it is optimized to achieve the best decoding performance. However, due to its poor sensing capability, it is eventually outperformed by the $\lambda = 1$ model under high $\mathrm{SNR}_d$ value. As a result, the NOS code with $\lambda = 1$ provides a significantly improved sensing performance without harming the PER performance. Thus, we adopt the NOS code trained under $\lambda = 1$ throughout the paper.

We then show the performance improvement brought by the NOS code over the SPARC code which adopts complex Gaussian codebook. We adopt the same simulation setup in Fig. \ref{fig:simu_siso_2d} (a) \& (b) and set the number of iterations, $N_i = 4$. Both the PER and the MSE performances are shown in Fig. \ref{fig:simu_siso_2d} (c) which correspond to the left and the right part of the figure, respectively. It can be observed that the NOS codes with different $\lambda \in \{0, 1\}$ values outperform the SPARC code in terms of both sensing and decoding performances demonstrating the effectiveness of the end-to-end optimization.

\subsubsection{Evaluation of different $\bm{u}$ implementations}
We then show the effectiveness of the proposed $\bm{u}$ implementation illustrated in Proposition \ref{pro:pro1}. In this simulation, we compare the sensing and decoding performances achieved by the proposed $\bm{u} \leftarrow \bm{0}_N$ to a naive protocol where $\bm{u}$ is replaced by the re-encoded codeword, $\hat{\bm{x}}$, from a wrongly decoded bit sequence, $\hat{\bm{c}}$ with $T = 0$. The simulation setup is identical to that in Fig. \ref{fig:simu_siso_2d}. As can be seen in Fig. \ref{fig:u_policy}, both the PER and the MSE performances of the proposed protocol outperform that of the naive one. It is especially the case in terms of PER where a 4 dB SNR gain can be observed when PER equals to $3\times 10^{-2}$. Thus, we adopt the protocol introduced in Proposition \ref{pro:pro1} throughout the paper.

\subsection{Mobile SIMO}
Finally, we show the effectiveness of the proposed IPSCD algorithm in the mobile SIMO scenario where the generation of the RCS, delay and Doppler parameters are identical to that in the mobile SISO scenario, while the AoA values, $\theta_\ell$, are sampled uniformly within the range of $[\frac{\pi}{6}, \frac{5\pi}{6}]$.

\subsubsection{Illustrations of the sensing procedure}
To start with, we first demonstrate the sensing procedure, $h(\cdot)$, introduced in Section \ref{sec:sense_alg} via numerical experiment. In this simulation, we adopt $M_p = 32$ OFDM symbols each contains $N = 32$ OFDM subcarriers as pilot signal.  Note that only the pilot signal is used for sensing and the number of data symbols is set to $M_d = 0$. Moreover, we set the number of receive antennas to $N_r = 8$, the delay and Doppler resolution to $M = N= 32$ and the pilot SNR to $SNR_p = 0$ dB.

As can be seen in Fig. \ref{fig:sensing_visualize}, the ground truth AoA, delay and Doppler parameters are $\bm{\theta} = \{70, 90, 110\}, \bm{n} = \{4, 5, 3\}, \bm{m} = \{10, 11, 1\}$, respectively.\footnote{Here we assume each entry of $\bm{m}$ ranges from $[1, M]$ for better illustration, which differs from the definition in \eqref{equ:real_dd}.} As illustrated in Section \ref{sec:sense_alg}, we first produce the estimated AoA values, $\bm{\hat{\theta}} = \{71, 90, 110\}$ via MUSIC algorithm as shown in the middle left of the figure. Then, for each of the estimated ${\hat{\theta}}_\ell$, we apply the data substitution, equalization and the 2D-FFT algorithm to produce the corresponding delay and Doppler parameters which are compared with the ground truth shown in the remaining plots. 
It's worth mentioning that, for all the estimated AoA values, the corresponding $\hat{\tau}_\ell$ and $\hat{\mu}_\ell$ match with the ground truth even there is an estimation error where $\hat{\theta}_1 \neq \theta_1$.  This shows the effectiveness of the sensing algorithm in Section \ref{sec:sense_alg}.

\subsubsection{Performance comparison of different channel codes}
We then consider the performance comparison of the Polar code and the NOS code trained under different $\lambda$ values in the mobile SIMO setup. In this simulation, we adopt the same setup as in Fig. \ref{fig:sensing_visualize}  except the number of OFDM symbols for the pilot and data transmission are set to $M_p = 4, M_d = 28$, respectively. The pilot SNR is set to $\mathrm{SNR}_p = 0$ dB and the data SNR, $\mathrm{SNR}_d \in [-3, 3]$ dB. The parameters of the Polar and the NOS code are identical to that in the mobile SISO scenario.

The Polar code and the NOS code trained under $\lambda = 1$ is compared in Fig. \ref{fig:simu_simo_3d} (a) where both channel codes adopt $N_i = 4$. As can be seen, both schemes achieve nearly the same sensing performance, yet the NOS code outperforms the Polar baseline in terms of PER by a small margin. The PER performance of the two channel codes with perfect CSI are shown in the dashed curves. As can be seen, the proposed NOS code outperforms the Polar baseline with an approximately $0.5$ dB SNR gain when $\text{PER} \approx 10^{-3}$.

\subsubsection{Performance evaluation for different block lengths} We then illustrate the sensing and decoding performances of the CIPSAC system with IPSCD algorithm for the channel codes with different block lengths. As illustrated in \cite{nos, mimo_nos}, the SNR gain of the NOS code shrinks w.r.t. the Polar coded modulation under larger block length. Thus, in this simulation, we adopt the rate-1/2 Polar code with QPSK modulation and set the block lengths, $N = 32, 64, 128$ which correspond to the numbers of information bits, $N_b = 21, 53, 117$, respectively. Their relative PER and MSE performances are shown in Fig. \ref{fig:simu_simo_3d} (b) \& (c), respectively. Note that we use $E_b/N_0$ as the x-axis to evaluate PER performance for a fair comparison.

As can be seen, both the PER and MSE performances improve with larger $N$. It aligns with the intuition for the PER performance, as the error correction ability of the channel codes improves w.r.t. the block length. For the sensing performance, the SIMO-OFDM signal with larger block length ($N$) provides more ($N(M_p+M_d)$) data symbols for accurate AoA estimation which further improves the quality of the subsequent delay, Doppler and RCS estimation leading to lower sensing MSE. Finally, we note that though increasing the block length improves both the sensing and communication performances, it also causes higher time delay and computational complexity when performing the channel decoding algorithm.

\section{Conclusion}
In this paper, we propose a novel CIPSAC system with OFDM, where the BS passively senses the channel parameters and decodes the information of the user simultaneously using both the pilot and data symbols from the user. We consider the SIMO-OFDM scenario where the channel is consisted of multiple mobile targets and the BS is equipped with multiple antennas. 
Upon receiving the pilot and data signals, a novel IPSCD algorithm is proposed, where the correctly decoded codewords which pass the CRC are utilized to improve the sensing performance at the BS. Detailed illustrations of the sensing procedure which outputs satisfactory parameter estimates are provided which serves as the core component for the IPSCD algorithm.
To further improve the channel decoding performance, we adopt the learning-based NOS code and introduce a new loss function to train NOS encoder for different sensing and communication trade-offs. 
Simulation results show the effectiveness of the proposed CIPSAC system and the IPSCD algorithm over different experimental setups, where both the sensing and communication performances are significantly improved with a few iterations.  We also carry out ablation studies for a comprehensive understanding of the proposed scheme.

\bibliographystyle{IEEEbib}
\bibliography{refs}

\appendices
\section{On the optimality of the $\bm{u}$ implementation in Proposition \ref{pro:pro1}}\label{sec:APPA}
This appendix provides detailed derivations on the optimality of the proposed data substitution procedure shown in Proposition \ref{pro:pro1} of Section \ref{sec:sense_alg}. We consider a scenario where the total number of transmitted OFDM symbols is identical to the Doppler resolution, i.e., $M_p + M_d = M$, the receiver has a single antenna and the transceiver communicates over the wireless channel induced by a single target with delay $n_0 \in [0, N_G-1]$ and Doppler $v_0 \in [-\frac{M}{2}, \frac{M}{2}-1]$. Note that the proof is also applicable to the multiple target scenario by applying successive interference cancellation (SIC) algorithm. As a result, the channel described in \eqref{equ:simo_channel} degrades to:
\begin{equation}
    Y_{n, m} = a X_{n, m} e^{j 2\pi \frac{n_0 n}{N}} e^{-j 2\pi \frac{v_0 m}{M}} + W_{n, m},
    \label{equ:uni_tgt_channel}
\end{equation}
where $a \sim \mathcal{CN}(0, 1), W_{n, m} \sim \mathcal{CN}(0, \sigma^2)$ denote the complex gain and the i.i.d. AWGN term, respectively.
It is worth mentioning that the transmitted signal $\bm{X} = [\bm{X}_p, \bm{X}_d]$ is comprised of the pilot and data OFDM symbols where the pilot signal, $\bm{X}_p$, has unitary entry, i.e., $\bm{X}_p[n,m] = 1$ and the data signal, $\bm{X}_d[n,m]$ has zero mean and unit variance. We note that this property can be theoretically/numerically verified for the SPARC code adopting random Guassian codebook/learning-based NOS code.
Without loss of generality, we assume $(n_0, v_0) = (0, 0)$.

The peak to side lobe ratio (PSR) defined as
\begin{equation}
    \text{PSR} \triangleq \mathbb{E}\left(|r_{0, 0}|^2 - \sum_{(n_g, v)\neq (0, 0)} |r_{n_g, v}|^2 \right),
    \label{equ:PSR}
\end{equation}
is adopted to evaluate different $\bm{u}$ implementations in \eqref{equ:def_tilde_Z} where $r_{n_g, v}$ is the matched filter output for the $(n_g, v)$-th grid on the delay-Doppler plane given by:
\begin{equation}
    r_{n_g, v} \triangleq \sum_{m,n} Y_{n, m} S^*_{n, m}e^{j 2\pi \frac{n_g n}{N}} e^{-j 2\pi \frac{vm}{M}},
    \label{equ:r_nv}
\end{equation}
where $n_g \in [0, N_G-1], v \in [-\frac{M}{2}, \frac{M}{2}-1]$ while $\bm{S} \in \mathbb{C}^{N\times M}$ is the reference signal for the matched filter.

As illustrated in Section \ref{sec:CIPSAC}, due to the decoding error, the receiver might not have full access to $X_{n, m}$. In particular, we consider the scenario where a single packet error occurs in the data OFDM symbols with index $m^* \in [M_p, M-1]$ and the corresponding CRC flag, $T_{m^*} = 0$. After introducing the substituted data packet, $\bm{u} \in \mathbb{C}^{N}$, which is randomly generated at the receiver, the  reference signal, $\bm{S} \in \mathbb{C}^{N \times M}$, is defined as:
\begin{equation}
    S_{n, m} = \left\{
    \begin{aligned}
    X_{n, m}  & , & m \neq m^*, \\
    u_n & , & m = m^*.
    \end{aligned}
    \right.
    \label{equ:def_Skm}
\end{equation}
Substituting \eqref{equ:uni_tgt_channel} and \eqref{equ:def_Skm} into \eqref{equ:r_nv}, and note that $(n_0, v_0) = (0, 0)$, we have:

\begin{align}
    &|r_{n_g, v}|^2 = 
    \left|\sum_{m,n} a X_{n, m} S^*_{n, m} e^{j 2\pi \frac{n_g n}{N}} e^{-j 2\pi \frac{vm}{M}}\right|^2 + \notag \\
    &2\Re\left({\sum_{\substack{n, n' \\ m, m'}} |a|^2 X_{n, m} S^*_{n, m} W^*_{n', m'} S_{n', m'} t_{n_g}(n, n') t^*_v(m, m')}\right) \notag \\
    &+\sum_{\substack{n, n' \\ m, m'}} W_{n, m} S^*_{n, m} W^*_{n', m'} S_{n', m'} t_{n_g}(n, n') t^*_v(m, m'),\label{equ:r_nv_square}
\end{align}
where $t_{n_g}(n, n') \triangleq e^{j 2\pi \frac{n_g(n-n')}{N}}, t_v^*(m, m') \triangleq e^{-j 2\pi \frac{v(m-m')}{M}}$.

Taking expectation over \eqref{equ:r_nv_square}, it can be immediately obtained that the second term in \eqref{equ:r_nv_square} is zero due to $\mathbb{E}(W_{n, m}) = 0$ and $\bm{W}$ is independent with $\bm{X}$ and $\bm{S}$. The expectation of the remaining terms can be expressed as:
\begin{align}
    \mathbb{E} |r_{n_g, v}|^2 & =  \mathbb{E}\left|\sum_{m,n} a X_{n,m} S^*_{n,m} e^{j 2\pi \frac{n_g n}{N}} e^{-j 2\pi \frac{vm}{M}}\right|^2 \notag \\
    & + \sigma^2 (M-1)N + \sigma^2 \sum_{n} \mathbb{E}|u_{ n}|^2,
    \label{equ:E_r_nv}
\end{align}
which is obtained by noticing $\mathbb{E}(W_{n, m}W^*_{n', m'}) = \sigma^2 \delta(n-n', m-m')$ and
\begin{equation}
    \mathbb{E}(S_{n, m}S^*_{n', m'}) = 
    \begin{cases}
    0, & m \neq m', \\
    1, & n=n', m=m'\neq m^*, \\
    \mathbb{E}(|u_n|^2)  , & n=n', m=m'= m^*.
    \end{cases}
    \label{equ:sum_smk}
\end{equation}

The first term in \eqref{equ:r_nv_square} can be further expressed as:
\begin{align}
    &\left|\sum_{n, m\neq m^*} a p_{n, m} e^{j 2\pi \frac{n_g n}{N}} e^{-j 2\pi \frac{vm}{M}} \right. \notag \\
    &+ \left. \sum_n a X_{n, m^*}u_n e^{j 2\pi \frac{n_g n}{N}} e^{-j 2\pi \frac{vm^*}{M}}\right|^2 \notag \\
    =  &\sum_{\substack{n, n' \\ m, m'\neq m^*} }   \underbrace{|a|^2 p_{n, m} p_{n', m'} t_{n_g}(n,n') t_v^*(m, m')}_{\triangleq 
 \; O_{n_g,v}} \notag\\
    &+ 2\Re \left(|a|^2\sum_{\substack{n, n' \\ m\neq m^*}} p_{n, m} X^*_{n', m^*}u^*_{n'} t_{n_g}(n,n') t_v^*(m, m^*) \right) \notag \\
    &  + |a|^2 \left|\sum_{n} X_{n, m^*}u_{n} e^{j 2\pi \frac{n_g n}{N}} \right|^2,
    \label{equ:term1}
\end{align}
where we introduce $p_{n, m} \triangleq |X_{n, m}|^2$ for simplicity.
It can be noticed that the first term is independent w.r.t. different $\bm{u}$ implementations. For the second and the third terms, their expectations are zero and $|a|^2 \sum_n \mathbb{E}|u_n|^2$ using the fact that $X^*_{n', m^*}$ is independent with $p_{n, m}, u^*_{n'}$ and $\mathbb{E}(X^*_{n', m^*}) = 0$.

By substituting \eqref{equ:term1} and \eqref{equ:E_r_nv} into \eqref{equ:PSR}, we obtain:
\begin{align}
    \text{PSR} &= (O_{0,0} - \sum_{(n_g, v) \neq (0, 0)} O_{n_g, v}) - (MN_G-1)(M-1)N\sigma^2 \notag \\
    & -(N_G M-1)(\sigma^2 + |a|^2) \sum_n \mathbb{E}(|u_n|^2),
\end{align}
where the first and the second terms are independent with $\bm{u}$. It is obvious that to maximize the PSR value, we should set $\bm{u} = \bm{0}_N$ as illustrated in Proposition \ref{pro:pro1} as the third term is  always non-negative. This finishes the proof of Proposition \ref{pro:pro1}.
\end{document}